\documentclass[11pt]{article}

\usepackage{color}
\usepackage{epsfig}
\usepackage{epstopdf}
\usepackage[section]{placeins}
\usepackage{graphicx}
\usepackage{amsmath}
\usepackage{amssymb}
\usepackage{amsfonts}
\usepackage{url}
\usepackage{multirow}
\usepackage{array}
\usepackage{latexsym}
\usepackage{natbib}
\usepackage[noline,ruled]{algorithm2e}
\usepackage{bigstrut}
\usepackage{subcaption}
\usepackage{dutchcal}
\usepackage{kotex}

\usepackage[toc,page]{appendix}
\usepackage{comment}
\usepackage{kotex}

\newsavebox{\theorembox}
\newsavebox{\lemmabox}
\newsavebox{\claimbox}
\newsavebox{\factbox}
\newsavebox{\corollarybox}
\newsavebox{\examplebox}
\newsavebox{\remarkbox}
\newsavebox{\assbox}
\newsavebox{\propositionbox}
\newsavebox{\problembox}
\newsavebox{\defbox}

\savebox{\theorembox}{\noindent\bf Theorem}
\savebox{\lemmabox}{\noindent\bf Lemma}
\savebox{\factbox}{\noindent\bf Fact}
\savebox{\corollarybox}{\noindent\bf Corollary}
\savebox{\examplebox}{\noindent\bf Example}
\savebox{\remarkbox}{\noindent\bf Remark}
\savebox{\assbox}{\noindent\bf Assumption}
\savebox{\propositionbox}{\noindent\bf Proposition}
\savebox{\problembox}{\noindent\bf Problem}
\savebox{\defbox}{\noindent\bf Definition}

\newtheorem{prop}{\usebox{\propositionbox}}
\newtheorem{lem}{\usebox{\lemmabox}}

\def\blackslug{\hbox{\hskip 1pt \vrule width 4pt height 8pt depth 1.5pt
\hskip 1pt}}

\newcommand{\qed}{\mbox{}\hspace*{\fill}\nolinebreak\mbox{$\rule{0.7em}{0.7em}$}}

\newenvironment{proof}[1][\noindent \bf Proof:]{\begin{trivlist}
\item[\hskip \labelsep {\bfseries #1}]}{\end{trivlist}}

\newcommand{\vs}{\vspace*{3mm}}

\newcommand{\lt}{\left}
\newcommand{\rt}{\right}
\newcommand{\bex}{\begin{eqnarray*}}
\newcommand{\eex}{\end{eqnarray*}}

\newcommand{\rd}{{\rm d}}

\def\bb{{\bf b}}

\def\bI{{\bf I}}

\def\bW{{\bf W}}

\def\brho{\mbox{\boldmath$\rho$}}

\def\bbP{\mathbb{P}}

\allowdisplaybreaks

\begin{document}

%
%
%
%
%

\title{Strategic Users in a Priority Queue with Bulk Service on Blockchains}

\author{
Donghwa Seo\thanks{Samsung SDS, Seoul, South Korea, E-mail: {\tt properitas95@gmail.com}} \\
\small{\it Samsung SDS}
\vspace{3mm}\\
Kyoung-Kuk Kim\thanks{Corresponding author, KAIST, College of Business, Seoul, South Korea, E-mail: {\tt kkim@kaist.ac.kr}}\\
\small{\it Korea Advanced Institute of Science and Technology}
}

\date{July 2025}

\maketitle

\baselineskip 18pt
\begin{abstract}
This paper analyzes transaction fees on blockchains by considering that they form a priority queue and users play a queueing game. This modeling approach using M/G\textsuperscript{$K$}/1 priority queue, we provide new insights into the dynamics governing transaction fees and its impact on user behavior. This work contributes to the literature in that we find semi-closed form expressions for steady-state quantities in the target queue and that the relationship between the delay cost of a user and the transaction fee (bid) is extended to the case of general block generation time. We apply our results to the Bitcoin network and simulate user responses under various scenarios. Cross-chain analysis across Bitcoin, Dogecoin, and Litecoin reveals similarities in normalized cost structures.

\vs
\noindent
{\sc Keywords:} Queueing; Blockchain; Transaction fees; M/G\textsuperscript{$K$}/1 queue; Prioritized queuing game 
\end{abstract}

\section{Introduction}\label{sec:intro}
The advent of blockchain technology has revolutionized decentralized systems, fostering the development of cryptocurrencies, decentralized exchanges, non-fungible tokens, and various decentralized applications. Public blockchains, in particular, offer distinct advantages: universal access, data integrity, decentralized operations, and transparent data that enhance trust. While such innovations and advantages are the core of a rapidly expanding industry, the focus of academic research has been more on system stability and system optimization. User interactions with blockchain services have been relatively less explored. 

From the users' perspective in a public blockchain system, they experience a unique form of waiting. They broadcast transactions, hoping to be included in the next block. Since multiple transactions are processed in a new block, this procedure is  akin to a queue served in bulk. Unlike fixed-price services, users bid for priority and transactions with higher bids (fees) are chosen earlier than transactions with lower bids. This real-time auction therefore incurs time-varying transaction costs because bids are relatively higher (lower) when trading intensity is higher (lower). Hence, the form of waiting of blockchain users can be thought of as a priority queue with bulk service. The study of blockchain queues but remains sparse. In this paper, we focus on Bitcoin, the representative cryptocurrency, and analyze user interactions with the blockchain via a queueing game approach. We aim to provide a better understanding of the transaction fee dynamics and users' waiting costs, ultimately helping to resolve network congestion and to enhance network throughput. Our work was motivated by the emergence of BRC-20 in early 2023 which laid the technical foundation for non-fungible tokens to function on the Bitcoin network, enabling the development of associated applications. This expansion of functionality has led to increased network congestion and longer transaction queues. 



Our modeling approach is to see the blockchain queue as M/G\textsuperscript{$K$}/1 queue. Transaction arrivals follow a Poisson process, block generation times have a general distribution with $K$ transactions served in each block, and there is a single server. Based on the assumption that users make optimal bids associated with their waiting costs, we infer users' cost structure from transaction data. For this, we extend the existing study on bulk queues and provide a new method of calculating expected waiting times and expected queue length. We believe that our results give new insights into users' cost structure and strategic bidding behaviors. This also allows us to examine possible consequences of protocol changes or strategic actions of miners on user behaviors. For this purpose, we numerically test the effects of changes in the distribution of block generation time. Before we present our model in detail, the relevant literature is reviewed in the following paragraphs. 


\noindent 
{\bf Queueing approach.} Bulk queues have been researched for many years, starting in the mid 20th century. \cite{bailey1954queueing} studies the waiting time distribution of patients at a hospital where there is a maximum number that a medical consultant can see at a session. The service time distribution is modeled as $\chi^2$ distribution. \cite{chaudhry1983first} provide a comprehensive treatment of bulk queues, and their presentation of M/G\textsuperscript{$K$}/1 is highly relevant to our model setup. There are more recent developments on bulk queues such as \cite{claeys2013tail} or \cite{banerjee2015analysis}. Both papers look at batch arrivals and bulk services. The former focuses on the tail probabilities of customer waiting whereas the latter is on steady state probabilities particularly when arrivals are Markovian and service times are of phase type.   

Such earlier papers on bulk queues have been applied to blockchains in recent years. \cite{kawase2017transaction} and \cite{kasahara2019effect} present initial attempts to analyze transaction-confirmation time for Bitcoin and the impact of transaction fees. The authors use this same bulk queue model with Poisson arrivals and general service distribution. However, as pointed out in \cite{li2018blockchain}, their approach involves infinitely many unknown numbers so that actual implementation is restricted to the case of exponential service time. In order to circumvent this problem, \cite{li2018blockchain} use a continuous time Markov process of GI/M/1 type. On the other hand, Queueing approach has been applied not only to transaction queues but also to blocks generated at blockchain nodes. \cite{papadis2018stochastic} develop a stochastic model for such queues of minted blocks to study the impact of block dissemination delays.

\noindent 
{\bf Transaction fees.} The Bitcoin blockchain system rewards miners who successfully generate a new block with a fixed amount of Bitcoins. The total supply of Bitcoins is capped, and in the long run, the system will rely on transaction fees.  For this reason, various aspects of fee dynamics and their implications have received increasing attention as the size of the network grows. \cite{houy2014economics} is an early work on transaction fees where the author considers a simple partial equilibrium model and a fee is viewed as the price for the block space. \cite{moser2015trends} offer empirical aspects of transaction fees. They find that higher fees indeed result in faster confirmation and that impatient users offer higher fees. A more recent empirical work is \cite{ilk2021stability}. This work specifies the inelastic nature  of the demand curve of users whereas   the supply (of block space) curve of miners is elastic to transaction fees. The recent literature on fees such as \cite{lavi2022redesign, basu2023stable, rough2024fee} reconsider transaction fee mechanisms. 

\noindent
{\bf Game theoretic approach.} To understand user behaviors on blockchains with transaction fees, game theoretic models have been employed to account for the prioritized mechanism in place. In fact, the strategic behavior of customers in queueing systems has been extensively studied in the context of congestion games or queueing games. \cite{hassin2003queue} provide a comprehensive overview, detailing various models such as heterogeneous customers, competing servers, queues with reneging, and queues with priorities. More recent works include \cite{dimitri2019transaction, easley2019mining}. In \cite{dimitri2019transaction}, transaction fees are framed as a Nash equilibrium outcome  of a congestion game, where the winning Bitcoin miner (who solves the puzzle) functions as an auctioneer. In \cite{easley2019mining}, users play the game of network participation and fee payment. By assuming that transactions flow in and flow out at fixed rates, the authors derive an equilibrium behavior of network participants and study the impact of  microstructural features.

\cite{huberman2021monopoly} is most relevant to our work. In their model, the transaction queue is represented as an M/M\textsuperscript{$K$}/1 queue with fixed rates. This simplification gives us a closed form expression for the steady-state queue length distribution and expected waiting time. Here, the waiting time is computed for each user with delay cost or load parameter $\rho$. Based on this, users' individual optimization leads to an equilibrium bidding strategy and to an equilibrium revenue for miners. We extend this framework to account for a general service time distribution. To do so, we develop a numerical procedure for computing the steady state distribution, addressing the issue identified by \cite{li2018blockchain} mentioned above. Furthermore, we adopt the approach of \cite{huberman2021monopoly} to infer the optimal bidding strategy and the hidden cost structure of users from transaction data. Our generalized model allows for comparisons of user behavior across different blockchains.

This paper is structured as follows. Section~\ref{sec:model} presents our model setup. In Section~\ref{sec:analysis}, we conduct the steady-state analysis of the transaction queue and compute relevant quantities such as expected queue length. Section~\ref{sec:bidding} studies the optimal bidding strategy of individual users as a function of their delay costs so that we build a link between delay costs and transaction fees. In the section that follows, we consider the Bitcoin transaction data to infer users' cost structure. Section~\ref{sec:conclusion} concludes the paper.

\section{The Model}\label{sec:model}

In a blockchain system with proof-of-work protocol, transaction requests are stored in the mempool, part of which are confirmed by the successful miner who generates a new block. Since users are impatient, their bids for transaction confirmation are positively associated with delay costs. This delay or waiting time of a user then depends on the level of congestion and the distribution of block generation time $B$. See Figure~\ref{fig:3.1.QueuingModel}. Therefore, the final transaction fee collected by the winning miner is the outcome of individual user's optimization based on the load parameter $\rho$, waiting cost $c_i$ with common distribution $F$, and block distribution $F_B$.

\begin{figure}[htb!]
    \centering
    \includegraphics[width=\textwidth]{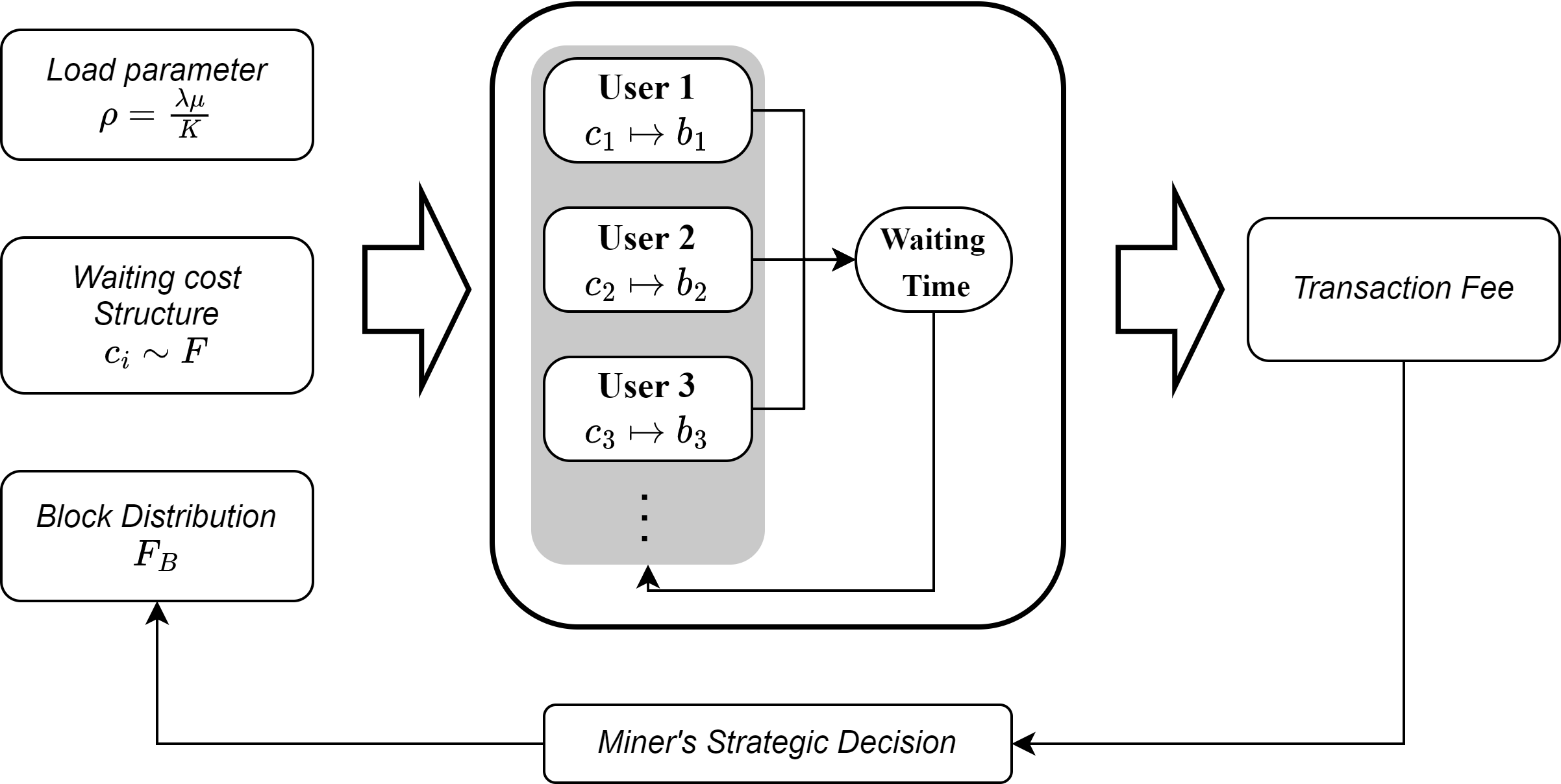}
    \caption{Description of the system components and strategic decisions of users.}
    \label{fig:3.1.QueuingModel}
\end{figure}

We note that the block distribution $F_B$ is also an outcome of collective operating decisions by miners. Since miners are motivated by economic rewards, they weigh rewards against costs and their strategic actions affect $F_B$. For instance, \cite{kim2024mind} study a Nash equilibrium of miners' actions and conclude that there is a possible mining gap, i.e. partial utilization of mining rigs, if the economic reward for mining is not sufficient. In our queueing theoretic analysis, we focus on the user side and treat the block distribution as given. However, we numerically test the impact of distributional changes of block generation time. Lastly, there is yet another important mechanism called difficulty adjustment in the system. This controls the winning probability so as to make actual block generation times close to a certain target, i.e. 10 minutes for Bitcoin. Therefore, the combined effect of miners' operating decisions, users' bidding decisions, and the difficulty adjustment determines the transaction dynamics of a blockchain system. For our analysis, however, we assume fixed block distribution as well as fixed difficulty as we are interested in the steady-state analysis of the queue. 

To concentrate on the system's fundamental dynamics, we make other assumptions as follows. First, transaction requests are uniform in size. Second, each block contains up to $K$ transactions. Third, transaction arrivals follow a Poisson process with a fixed rate $\lambda$. Let us denote the mean block generation time by $\mu = \mathbb{E}[B]$. Then, we can define the load parameter by $\rho = \lambda\mu/K$. Fourth, $\rho <1$ to make the queue stable in the long run. This allows us to see the queue as M/G\textsuperscript{$K$}/1 as long as there is no delay in transaction propagation to the mempool. The queue size dynamics per se does not involve uers' strategic actions as transactions are not treated differently. It is the fee dynamics where users' delay cost structure, block distribution, and the load parameter $\rho$ play together. For this, we make additional assumptions on the user side. 
Users are assumed to be aware of steady-state behaviors but they do not make real-time observations of the queue. Hence, optimal bidding decisions, bid $b_i$ by the $i$-th user, are made by considering the trade-off between the fee payment and the delay cost where this delay is the expected waiting time in the steady-state.  In this setup, we closely follow \cite{huberman2021monopoly}. Specifically, the utility of the $i$-th user is given by $u_i = - b_i - c_i W_i$ where $W_i$ is the expected waiting time for a fee $b_i$. A user exits the system if the utility becomes excessively negative. Section~\ref{sec:bidding} details the optimization of the user utility and finds the relationship between waiting cost  and optimal bid.

\section{Steady State Analysis}\label{sec:analysis}

\subsection{Limiting Distribution} 

This section analyzes a blockchain-inspired bulk service queue where transactions are processed in batches during block generation events. Unlike traditional bulk service queues, our model has no separate service station. Instead, up to $K$ transactions from the mempool are processed instantaneously when a new block is created. Although the selection of transactions depends on transaction fees, the queue dynamics and queue length distributions do not depend on them. 
Hence, in this section, we aim to derive steady state equations for the limiting distribution of the queue length $N(t)$ at time $t$. 

Since the service time distribution or block distribution $F_B$ is general, $N(t)$ itself is not Markovian. However, by considering the time elapsed from the last block generation, say $X(t)$, we have a Markovian random vector $(N(t), X(t))$ with the state space $\{(n,x) | n \in \mathbb{Z}_+, \; x \geq 0\}$. The dynamics of the system is visualized in Figure~\ref{fig:queue_dynamics}. At any point in time, there are three possible cases: no event, single transaction arrival, and  block creation. Three sub-figures show how the state variables $n$ and $x$ change in each case. Figure~\ref{fig:3.4.MG_K1_2} corresponds to the case of no event. The number in the queue stays the same but the time elapses by $\rd t$. On the other hand, if there is a transaction arrival prior to new block creation, then the time elapses by $\rd t$ and the number in the queue increases by 1. Hence, the state variables $n$ and $x$ change simultaneously. Lastly, when there is a new block generated, the elapsed time falls from $x$ to zero and the size of the queue in the mempool decreases by $K$ at maximum.  

\begin{figure}
    \centering
    \begin{subfigure}[htb!]{1\textwidth}
        \centering
        \includegraphics[width=.5\textwidth]{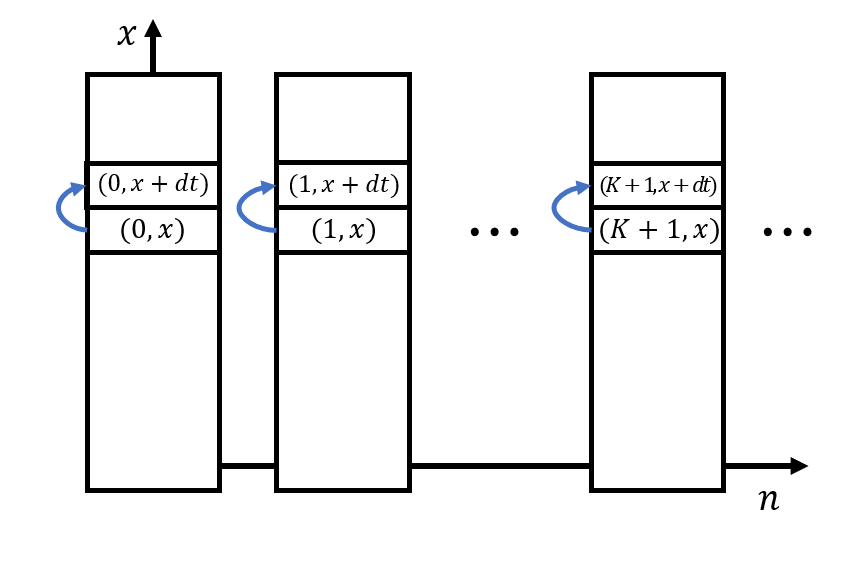}
        \caption{State variables increase in $x$ with no block generation nor transaction arrival during $(x,x+\rd t]$}
        \label{fig:3.4.MG_K1_2}
    \end{subfigure}
    \hfill
    \begin{subfigure}[htb!]{1\textwidth}
        \centering
        \includegraphics[width=0.5\textwidth]{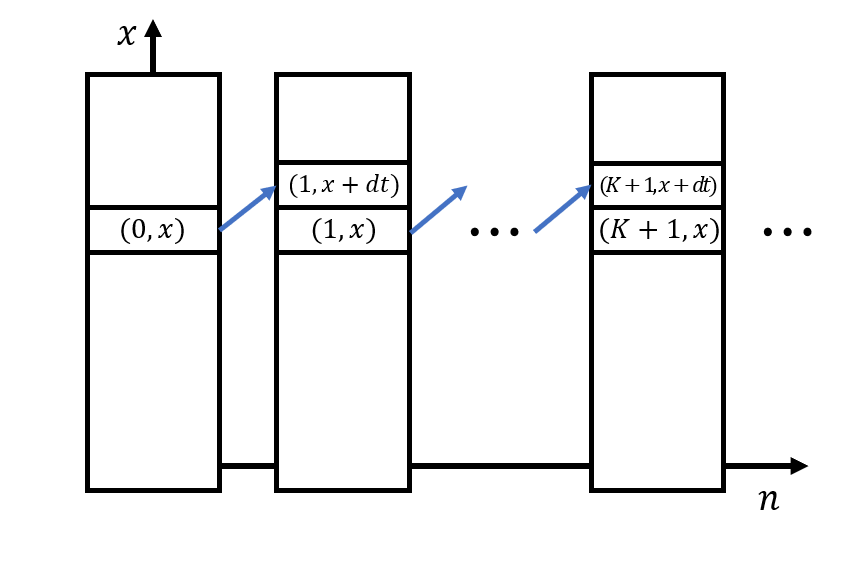}
        \caption{State variables increase simultaneously with transaction arrival during $(x, x+\rd t]$}
        \label{fig:3.4.MG_K1_3}
    \end{subfigure}
    \hfill
    \begin{subfigure}[htb!]{1\textwidth}
        \centering
        \includegraphics[width=0.5\textwidth]{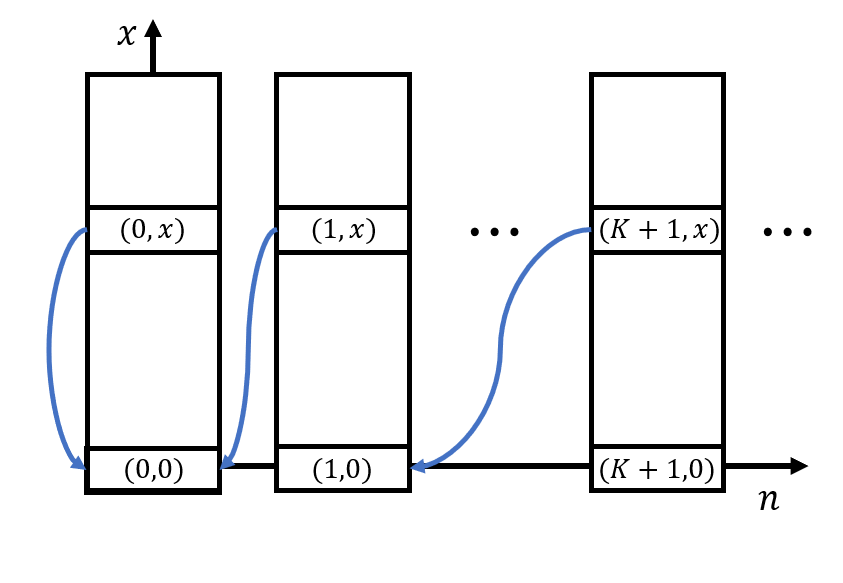}
        \caption{State variables become $(0,0)$ or $(n, 0)$ with block creation during $(x, x+\rd t]$}
        \label{fig:3.4.MG_K1_4}
    \end{subfigure}
       \caption{State transitions in the transaction queue where states are (number of transactions, elapsed time).}
   \label{fig:queue_dynamics}
\end{figure}

Let us first consider the embedded chain $\{N_k\}$ defined by $N_k = N(T_k -)$. Here, $T_k$ is the generation time of the $k$-th block. Since arrivals are Markovian, $\{N_k\}$ is a discrete-time Markov chain. It is a simple matter to check this process satisfies 
$$
N_k = (N_{k-1} - K)^+ + A_k
$$
where $A_k$ is the number of transaction arrivals between $T_{k-1}$ and $T_k$. Clearly, $A_k$ is a Poisson random variable with mean $\lambda \mu $ and it is independent of $N_{k-1}$. The positive recurrence of $\{N_k\}$ can be shown by the Foster-Lyapunov criterion. With a Lyapunov function specified by $v(n)=n$, the so called expected drift of $v$ is 
$$
\Delta v(n)  = \mathbb{E}\big[ v(N_k) - v(N_{k-1})| N_{k-1}=n\big] = \lambda \mu - \min\{K, n\}. 
$$
Due to our assumption that $\rho = \lambda \mu / K < 1$, we see $\Delta v(n)$ is negative for any $n \geq K$. This negative drift outside a finite set ensures the positive recurrence of $\{N_k\}$. It is standard to argue that the process $(N(t), X(t))$ is also positive recurrent thanks to the fact that $(N(t), X(t))$ regenerates at $T_k$'s in addition to the positive recurrence of $\{N_k\}$. Following \cite{downton1956limiting, chaudhry1983first}, we then derive differential equations that characterize stationary probabilities. 

For this, let us consider the density function $\pi_n(x)$ as the limit of $\pi_n(x,t)$ which is defined by $\pi_n(x,t)  \rd x = \bbP\lt( N(t)= n, x < X(t) \leq x + \rd x \rt)$. Here $\pi_n(x,t)\rd x$ is the probability of having $n$ in the system and the elapsed time since last block generation being in $(x, x+\rd x]$.  
From the Poisson arrival, we have the probaiblity $\lambda \rd x$ of new arrival as time elapses. We also have the probability $\eta(x) \rd x$ of new block generation where $\eta(x)$ is the conditional service rate or the hazard rate $f_B(x) / \bar F_B(x)$. Therefore, the  likelihood $\pi_n(x+\rd x, t + \rd x)$ satisfies 
$$
\pi_n(x+\rd x,t+\rd x) = (1-\lambda \rd x)\pi_n(x,t) (1-\eta(x)\rd x)+\lambda \rd x\cdot \pi_{n-1}(x,t) (1-\eta(x)\rd x)+o(\rd x) 
$$
for $n=1,2, \ldots$. Instead of solving this equation for $(n,x, t)$, we are interested in limiting or equilibrium quantities as typically done in the analysis of queueing systems. Rearranging terms and given the existence of limiting distributions, we get 
\begin{equation}\label{eq-ode_n}
\frac{\rd \pi_n(x)}{\rd x} = - (\lambda + \eta(x))\pi_n(x) + \lambda \pi_{n-1}(x).
\end{equation}
For $n=0$, we note that $\pi_0(x+\rd x, t + \rd x) = (1 - \lambda \rd x) \pi_0(x,t) (1 - \eta(x)\rd x) + o(\rd x)$ and this leads to 
\begin{equation}\label{eq-ode_0}
\frac{\rd \pi_0(x)}{\rd x} = - (\lambda + \eta(x))\pi_0(x). 
\end{equation}

In addition to these differential equations, there are also boundary conditions for $\pi_0(0)$ and $\pi_n(0)$ that are the limiting probabilities of $\pi_0(0, t)$ and $\pi_n(0,t)$ right after new block generation. For the former, the event occurs whenever a new block is generated for the queue size less than or equal to $K$. Note that there is a possibility of an empty block. Hence, we have
\begin{equation}\label{eq-bdry_0}
\pi_0(0) = \sum_{n=0}^K \int_0^\infty \pi_n(x) \eta(x) \rd x.
\end{equation}
For the latter, the event of having $n$ transactions in the queue at the time of block generation is relevant only when there are $n+K$ transactions in the pre-completion queue. Therefore, 
\begin{equation}\label{eq-bdry_n}
\pi_n(0) = \int_0^\infty \pi_{n+K}(x) \eta(x) \rd x
\end{equation}
for $n=1, 2, \ldots$. Lastly, the probability distribution $\{\pi_n(x)\}$ satisfies $\sum_{n=0}^\infty \int_0^\infty \pi_n(x) \rd x = 1$. 
Using the above equations, we shall derive expressions for key performance metrics such as expected waiting time and expected queue length. For this, we calculate the probability generating function $\Pi(z; x)$ for the distribution $\{\pi_n(x)\}$ at elapsed time $x$. It is a variant of results found in  Chapter 4 of \cite{chaudhry1983first}. We also notice that similar derivations are done in \cite{kawase2017transaction} and \cite{kasahara2019effect}. The difference is that we have $\pi_0(x)$ here because blocks are produced regardless of the presence of transaction requests.

\begin{lem}\label{lem-pgf}
    The probability generating function $\displaystyle\Pi(z;x)=\sum_{n=0}^\infty \pi_n(x)z^n$ for the queue length at elapsed time $x$ is given by 
\begin{equation*}
\Pi(z; x) = \Pi(z; 0) \bar F_B(x) e^{-\lambda(1-z)x}.
\end{equation*}
Here, $\Pi(z;0) = \frac{\sum_{n=0}^K(z^K-z^n)\int_0^\infty \pi_n(x) \eta(x)\rd x}{z^K-\beta(\lambda(1-z))}$ with $\Pi(1; 0) = \mu^{-1}$ and $\beta(t)$ is the Laplace transform $\displaystyle\beta(t)=\int_0^\infty e^{-tx}f_B(x)\rd x$ of the block generation time. 
\end{lem}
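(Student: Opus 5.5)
The plan is to turn the system \eqref{eq-ode_n}--\eqref{eq-ode_0} into one first-order linear ODE in $x$ for the generating function. Multiply \eqref{eq-ode_n} by $z^n$, sum over $n\ge 1$, and add \eqref{eq-ode_0}. This gives
\begin{equation*}
\frac{\partial \Pi(z;x)}{\partial x} = -\bigl(\lambda(1-z)+\eta(x)\bigr)\Pi(z;x),
\end{equation*}
where the term $\lambda z\Pi(z;x)$ comes from shifting the index in $\sum_{n\ge1}\lambda\pi_{n-1}(x)z^n$. We work with $|z|\le 1$, so that termwise differentiation is justified by the uniform bound $\pi_n(x)\le \sum_m \pi_m(x)$ together with integrability. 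Since $\int_0^x \eta(s)\,\rd s = -\log \bar F_B(x)$, integrating the ODE gives $\Pi(z;x)=\Pi(z;0)\,\bar F_B(x)\,e^{-\lambda(1-z)x}$, which is the first claim.

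Next, to identify $\Pi(z;0)$, multiply \eqref{eq-bdry_n} by $z^n$ for $n\ge1$ and add \eqref{eq-bdry_0}. Write $p_n=\int_0^\infty \pi_n(x)\eta(x)\rd x$. Then
\begin{equation*}
\Pi(z;0)=\sum_{n=0}^K p_n + \sum_{n\ge1} p_{n+K}z^n .
\end{equation*}
Multiplying by $z^K$ gives
\begin{equation*}
z^K\Pi(z;0)=\sum_{n=0}^K (z^K-z^n)p_n+\sum_{n\ge0}p_n z^n .
\end{equation*}
The last sum equals $\int_0^\infty \Pi(z;x)\eta(x)\rd x$. Substituting the solution and using $\eta\bar F_B=f_B$, it becomes $\Pi(z;0)\int_0^\infty e^{-\lambda(1-z)x}f_B(x)\rd x=\Pi(z;0)\beta(\lambda(1-z))$. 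Rearranging yields the stated formula for $\Pi(z;0)$.

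Finally, for the normalization, integrate the solution over $x$ at $z=1$. This gives $1=\int_0^\infty\Pi(1;x)\rd x=\Pi(1;0)\int_0^\infty\bar F_B(x)\rd x=\Pi(1;0)\mu$, so $\Pi(1;0)=\mu^{-1}$. This evaluation avoids the indeterminate form $0/0$ in the closed-form quotient at $z=1$. It can be cross-checked by L'Hôpital's rule: the limit is $\sum_n (K-n)p_n/(K-\lambda\mu)$, which should be consistent with $\sum_n(K-n)p_n=(1-\rho)K/\mu$.

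The main obstacle is rigor rather than algebra. One must interchange the infinite sum with $\partial_x$ and with $\int\cdot\,\eta\,\rd x$, and make sure the division by $z^K-\beta(\lambda(1-z))$ is legitimate. For the interchanges I would rely on nonnegativity and Tonelli's theorem for $0\le z\le1$, then extend to the disc by analyticity. For the division I would restrict to $z$ where $z^K\neq\beta(\lambda(1-z))$, since the identity is only claimed as a formula there.
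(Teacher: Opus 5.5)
Your proposal is correct. The identification of $\Pi(z;0)$ matches the paper's argument: you compute $\int_0^\infty\Pi(z;x)\eta(x)\,\rd x$ once from the solution and once from the boundary conditions \eqref{eq-bdry_0}--\eqref{eq-bdry_n}. The other two steps go differently.

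\textbf{The product form.} The paper never writes an ODE for the generating function. It solves each \eqref{eq-ode_n} with the integrating factor $e^{\lambda x}/\bar F_B(x)$, sums, and iterates the resulting representation $m$ times. This gives the partial sums of $e^{\lambda z x}$ plus a remainder of order $|\lambda z|^{m+1}x^m/m!$. In effect it is Picard iteration for $\psi(x)=\Pi(z;0)+\lambda z\int_0^x\psi(s)\,\rd s$, where $\psi(x)=e^{\lambda x}\Pi(z;x)/\bar F_B(x)$.

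You instead sum the differential equations first and solve $\partial_x\Pi=-(\lambda(1-z)+\eta)\Pi$ directly. This is shorter and is the classical bulk-queue route. The price is that termwise differentiation must be justified, whereas the paper only manipulates integrated forms.

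Your stated justification (the bound $\pi_n(x)\le\sum_m\pi_m(x)$ plus integrability) controls the terms, not their derivatives. You would need a dominating function for the derivatives, such as $\sum_n|z^n\pi_n'(x)|\le(2\lambda+\eta(x))\Pi(1;x)$, together with local integrability of $\eta\,\Pi(1;\cdot)$.

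A cleaner repair borrows the paper's first step. Multiply by $e^{\lambda x}/\bar F_B(x)$ before summing, so that $\eta$ drops out. The interchange is then controlled by $\max_{s\in[0,x]}e^{\lambda s}/\bar F_B(s)$ and $\int_0^\infty\Pi(1;s)\,\rd s=1$. The integral equation for $\psi$ then forces $\psi'=\lambda z\psi$, so no iteration is needed.

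\textbf{The normalization.} Putting $z=1$ directly and using $\int_0^\infty\bar F_B(x)\,\rd x=\mu$ is cleaner than the paper's limit $z\to1$. That limit uses L'H\^opital's rule and tacitly relies on continuity of $\Pi(\cdot;0)$ at $1$. Your by-product $\sum_{n=0}^K(K-n)p_n=(1-\rho)K/\mu$ is correct. It is equivalent to the normalization the paper later uses to fix $a_{K-1}$ in Proposition~\ref{prop-queue_length}.
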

\begin{proof}
When we solve \eqref{eq-ode_0}, we easily get $\pi_0(x) = \pi_0(0) \bar F_B(x) e^{-\lambda x}$. It is also easy to see that \eqref{eq-ode_n}  yields 
\begin{equation}\label{eq-lem_temp}
\pi_n(x) = e^{-\lambda x}\Bar{F}_B(x)\left(\pi_n(0)+\lambda\int_0^x \frac{e^{\lambda s}}{\Bar{F}_B(s)}\pi_{n-1}(s)\rd s\right).
\end{equation}
Then, we proceed as follows: 
\begin{eqnarray*}
        \Pi(z;x) &=&\sum_{n=0}^\infty \pi_n(x)z^n\\
        & =& e^{-\lambda x}\Bar{F}_B(x)\left(\pi_0(0)+\sum_{n=1}^\infty\left(\pi_n(0)+\lambda \int_0^x \frac{e^{\lambda s}}{\Bar{F}_B(s)}\pi_{n-1}(s)\rd s\right)z^n\right)\\
        &=& e^{-\lambda x}\Bar{F}_B(x)\left(\Pi(z;0)+\lambda z\sum_{n=0}^\infty z^n\int_0^x\frac{e^{\lambda s}}{\Bar{F}_B(s)}\pi_{n}(s)\rd s\right).
\end{eqnarray*}
We used \eqref{eq-lem_temp} in the second equality. In the third line of the equation, we replace $\pi_n(s)$ with the right hand side of \eqref{eq-lem_temp} for $n \geq 1$. For $n=0$, the term is simply $\pi_0(0)x$. Then, we get 
\begin{eqnarray*}
 \Pi(z;x)        & =& e^{-\lambda x}\Bar{F}_B(x)\left(\Pi(z;0)+ \lambda z x \pi_0(0) + \lambda z\sum_{n=1}^\infty z^n\int_0^x
        \left[ \pi_n(0) + \lambda \int_0^s \frac{e^{\lambda s_1}}{\Bar{F}_B(s_1)} \pi_{n-1}(s_1)\rd s_1 \right] \rd s\right)\\
        &= &e^{-\lambda x}\Bar{F}_B(x)\left(\Pi(z;0)(1+\lambda z x)+(\lambda z)^2\sum_{n=0}^\infty z^n\int_0^x \int_0^{s} \frac{e^{\lambda s_1}}{\Bar{F}_B(s_1)}\pi_{n}(s_1)\rd s_1 \rd s\right)\\
        &= &e^{-\lambda x}\Bar{F}_B(x)\left(\Pi(z;0)\sum_{n=0}^m\frac{(\lambda z x)^n}{n!}+(\lambda z)^{m+1}\sum_{n=0}^\infty z^n\int_0^x \cdots \int_0^{s_{m-1}} \frac{e^{\lambda s_m}}{\Bar{F}_B(s_m)}\pi_{n}(s_m)\rd s_m \cdots \rd s\right)\\
        &=& \Pi(z;0)\Bar{F}_B(x) e^{-\lambda(1-z)x}.
\end{eqnarray*}
Here, the third equality is obtained by repeating the same argument. The last equality is then obtained by sending $m$ to infinity. The convergence of the first term is trivial. The second term can be shown to converge to zero by obtaining an upper bound as 
\begin{eqnarray*}
\lefteqn{ 
|\lambda z|^{m+1}\sum_{n=0}^\infty   |z|^n    \int_0^x \cdots \int_0^{s_{m-1}} \frac{e^{\lambda s_m}}{\Bar{F}_B(s_m)}\pi_{n}(s_m)\rd s_m \cdots \rd s} && \\
& \leq & |\lambda z|^{m+1} \sum_{n=0}^\infty |z|^n \max_{s\in[0,x]}\frac{e^{\lambda s}}{\Bar{F}_B(s)} 
\int_0^x \cdots \int_0^{s_{m-1}} \pi_n(s_m) \rd s_m \cdots \rd s \\
& \leq & |\lambda z|^{m+1} \sum_{n=0}^\infty |z|^n \max_{s\in[0,x]}\frac{e^{\lambda s}}{\Bar{F}_B(s)} 
\int_0^x \cdots \int_0^{s_{m-2}} \int_0^\infty  \pi_n(s_m) \rd s_m \rd s_{m-1} \cdots \rd s \\
& \leq & |\lambda z|^{m+1} \sum_{n=0}^\infty |z|^n \max_{s\in[0,x]}\frac{e^{\lambda s}}{\Bar{F}_B(s)} 
\int_0^x \cdots \int_0^{s_{m-2}} \pi_n  \rd s_{m-1} \cdots \rd s \\
& \leq & |\lambda z|^{m+1} \sum_{n=0}^\infty |z|^n \max_{s\in[0,x]}\frac{e^{\lambda s}}{\Bar{F}_B(s)} 
\frac{x^m \pi_n}{m!} = |\lambda z|^{m+1} \Pi(|z|)  \max_{s\in[0,x]}\frac{e^{\lambda s}}{\Bar{F}_B(s)} 
\frac{x^m}{m!}.
\end{eqnarray*}
Here, $\pi_n = \int_0^\infty \pi_n(x) \rd x$ and $\Pi(z)$ is the probability generating function of $\{\pi_n\}$ which will be formally introduced in the next subsection. It is clear that the right hand side converges to zero as $m$ increases. As a consequence, the first claim in the statement of the lemma is proved. 

For the computation of $\Pi(z;0)$, we integrate $\Pi(z;x)$ with respect to the hazard rate
\begin{eqnarray*}
\int_0^\infty \Pi(z; x) \eta(x) \rd x &=& \Pi(z;0) \int_0^\infty \bar F_B(x) e^{-\lambda (1-z)x} \eta(x) \rd x \\
&=& \Pi(z;0) \int_0^\infty  e^{-\lambda (1-z)x} f_B(x) \rd x \\
&=& \Pi(z;0) \beta(\lambda(1-z)).
\end{eqnarray*}
On the other hand, the left hand side can also be written as 
\begin{eqnarray*}
\int_0^\infty \Pi(z;x) \eta(x) \rd x &=& \int_0^\infty \sum_{n=0}^\infty \pi_n(x)z^n \eta (x) \rd x \\
&=& \sum_{n=0}^K z^n \int_0^\infty \pi_n(x) \eta(x) \rd x + \sum_{n=K+1}^\infty z^n \int_0^\infty \pi_n(x) \eta(x) \rd x \\
&=& \sum_{n=0}^K z^n \int_0^\infty \pi_n(x) \eta(x) \rd x + z^K \sum_{n=1}^\infty z^n \int_0^\infty \pi_{n+K}(x) \eta(x) \rd x \\
&=& \sum_{n=0}^K z^n \int_0^\infty \pi_n(x) \eta(x) \rd x + z^K \sum_{n=1}^\infty z^n \pi_n(0) \\
&=& \sum_{n=0}^K z^n \int_0^\infty \pi_n(x) \eta(x) \rd x + z^K \big( \Pi(z;0) - \pi_0(0) \big)\\
&=& \sum_{n=0}^K \left(z^n - z^K \right) \int_0^\infty \pi_n(x) \eta(x) \rd x + z^K  \Pi(z;0).
\end{eqnarray*}
Here, \eqref{eq-bdry_n} is used in the fourth equality and \eqref{eq-bdry_0} in the last equality. By equating the two expressions above, we obtain 
$$
\Pi(z;0) = \frac{ \sum_{n=0}^K \left(z^K - z^n \right) \int_0^\infty \pi_n(x) \eta(x) \rd x}{ z^K - \beta(\lambda(1-z))}.
$$

It remains to show $\Pi(1;0) = \mu^{-1}$. The probability distribution $\{\pi_n\}$ satisfies $\sum_{n=0}^\infty \pi_n = 1$. Therefore, we have
\begin{eqnarray*}
1 &=& \lim_{z \rightarrow 1} \sum_{n=0}^\infty \int_0^\infty \pi_n(x) z^n \rd x \\
&=& \lim_{z \rightarrow 1} \int_0^\infty \Pi(z; x) \rd x \\
&=& \Pi(1;0) \lim_{z \rightarrow 1} \frac{1 - \beta(\lambda(1-z))}{\lambda(1-z)} \\
&=& \Pi(1;0) \mu
\end{eqnarray*}
where the third equality uses the integration by parts and the last equality is obtained by L'H\^{o}pital's rule.
\qed
\end{proof}

\subsection{Expected Queue Length}

The queue dynamics may be summarized by the queue length distribution and its mean. By integrating the limiting distribution with respect to the elapsed time, we get the probability of having $n$ transactions in the mempool at equilibrium, i.e. $\pi_n := \int_0^\infty \pi_n(x) \rd x$ for $n=0, 1, \ldots$. The associated probability generation function is defined by $\Pi(z) = \sum_{n=0}^\infty \pi_n z^n$. Then, observe that 
\begin{eqnarray*}
\Pi(z) &=& \sum_{n=0}^\infty \int_0^\infty \pi_n(x) z^n \rd x \\
&=& \int_0^\infty \Pi(z; x) \rd x \\
&=& \Pi(z; 0) \int_0^\infty \bar F_B(x) e^{-\lambda(1-z)x} \rd x\\
&=& \Pi(z; 0) \frac{1 - \beta(\lambda(1-z))}{\lambda(1-z)}. 
\end{eqnarray*}
Here, it is implicitly assumed that the integral on the right hand side of the third equality is finite. One sufficient condition, for example, is $\liminf_{x \rightarrow \infty} \eta(x)  > 0$ so that the last equality is valid for $z < 1  + \varepsilon$ for some sufficiently small positive $\varepsilon$. 

The expected queue length is then computed by $\Pi'(1)$. Since the mean block generation time $\mu$ and the maximum capacity $K$ are assumed to be fixed in the blockchain network, we view the expected queue length as a function of the load parameter $\rho$ and write 
\begin{eqnarray*}
Q(\rho)  &=& \Pi'(1) \\
&=& \Pi'(1; 0) \mu + \Pi(1;0) \frac{\rd }{\rd z} \Big|_{z=1} \frac{1 - \beta(\lambda(1-z))}{\lambda(1-z)} \\
&=& \Pi'(1; 0) \mu+ \Pi(1;0)\frac{\lambda}{2} \mathbb{E}\left[B^2\right]  \\
&=& \Pi'(1;0) \mu + \frac{\rho K}{2}\left(1 + \frac{\sigma^2}{\mu^2} \right)
\end{eqnarray*}
where $\sigma^2$ is the variance of the block generation time $B$ and $\Pi(1;0)$ is given in Lemma~\ref{lem-pgf}. The lemma also shows
\begin{equation}\label{eq-pi_temp}
\Pi(z;0) D(z) = \sum_{n=0}^K \left(z^K - z^n\right)  \int_0^\infty \pi_n(x)\eta(x) \rd x
\end{equation}
where $D(z) = z^K - \beta(\lambda(1-z))$ for notational simplicity. By directly differentiating the both sides twice, then one readily obtains a closed form expression for $\Pi'(1;0)$. Nevertheless, it involves the computation of $\int_0^\infty \pi_n(x) \eta(x) \rd x$ for $n=0, 1, \ldots, K-1$. It dates back to 50's when those values are found by searching for roots of $D(z)$ on and within the unit circle on the complex plane. See \cite{bailey1954queueing}. However, as pointed out in \cite{oblakova2019exact}, it is not trivial to find such zeros without any closed-form expression. Approximate methods may pose a problem as the precision of such values has a high impact. In our case, the scale itself is a challenge as $K$ is typically several thousands. Instead, we adopt the proposed method of \cite{oblakova2019exact} who apply their idea to discrete-time queueing systems with traffic applications. 

\begin{prop}\label{prop-queue_length}
     There exists a sufficiently small $\varepsilon$ such that $D(z)$ has no zero in the annulus $\{z \in \mathbb{C} : 1 < |z| \leq 1 + \varepsilon\}$ whereas there are $K-1$ solutions in $\{ z \in \mathbb{C}: |z| < 1\}$. With such $\varepsilon$, the expected queue length is then given by
    \begin{equation*}
        Q(\rho)=\frac{1}{2\pi }\int_{-\pi}^{\pi}\frac{D'(z(\varphi))}{D(z(\varphi))}\frac{z(\varphi)}{1-z(\varphi)}\rd \varphi + \frac{\rho K}{2}\left(1+\frac{\sigma^2}{\mu^2}\right)
    \end{equation*}
where $z(\varphi) = (1+\varepsilon)e^{i \varphi}$. 
\end{prop}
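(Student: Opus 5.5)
The plan has three parts: count the zeros of $D(z)=z^K-\beta(\lambda(1-z))$ with Rouché's theorem, use them to compute $\Pi'(1;0)$ from \eqref{eq-pi_temp}, and rewrite the result as a contour integral over $|z|=1+\varepsilon$ by the argument principle. Combining with the decomposition $Q(\rho)=\Pi'(1;0)\mu+\frac{\rho K}{2}(1+\sigma^2/\mu^2)$ derived just before the proposition then gives the formula.

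\textbf{Zero counting.} For $|z|=r$ with $r$ slightly larger than $1$, we have $|\beta(\lambda(1-z))|\le \beta(\lambda(1-r))$. Since $\beta(\lambda(1-r))=1+\lambda\mu(r-1)+o(r-1)$ and $r^K=1+K(r-1)+o(r-1)$, and $\lambda\mu<K$ because $\rho<1$, we get $|\beta(\lambda(1-z))|<|z|^K$ on $|z|=1+\delta$ for every small enough $\delta>0$. Rouché's theorem then says that $D$ has exactly $K$ zeros in $|z|<1+\delta$ for each such $\delta$, provided $\beta$ is analytic slightly beyond the unit disc, i.e.\ under the exponential-moment assumption stated before the proposition. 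Because the count is the same for every small $\delta$, no zeros lie in $1<|z|\le 1+\varepsilon$. On $|z|=1$ we have $|\beta(\lambda(1-z))|\le 1$, with equality only at $z=1$ for a nonlattice $B$. So apart from $z=1$ (a simple zero, since $D'(1)=K-\lambda\mu>0$), the remaining $K-1$ zeros $z_1,\dots,z_{K-1}$ lie in the open unit disc.

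\textbf{Computing $\Pi'(1;0)$.} $\Pi(z;0)$ is analytic in $|z|<1+\varepsilon$, so the right-hand side of \eqref{eq-pi_temp}, a polynomial $P(z)$ of degree $K$, vanishes at every zero of $D$. Hence $P(z)=c\,D_0(z)$ with $D_0(z)=(z-1)\prod_{k=1}^{K-1}(z-z_k)$, and
\[
\Pi(z;0)=c\,\frac{(z-1)\prod_k(z-z_k)}{D(z)}.
\]
Taking logarithmic derivatives at $z=1$ gives
\[
\frac{\Pi'(1;0)}{\Pi(1;0)}=\sum_{k=1}^{K-1}\frac{1}{1-z_k}+\frac{d}{dz}\Big|_{z=1}\log\frac{z-1}{D(z)}.
\]
With $\Pi(1;0)=\mu^{-1}$ from Lemma~\ref{lem-pgf}, this means $\Pi'(1;0)\mu$ equals the same expression. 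The last term is $-D''(1)/(2D'(1))$, which is computable from $\mathbb{E}[B^2]$.

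\textbf{Contour integral.} By the argument principle,
\[
\frac{1}{2\pi i}\oint_{|z|=1+\varepsilon}\frac{D'(z)}{D(z)}\frac{z}{1-z}\,\frac{dz}{z}
\]
collects residues at the zeros $z_k$, each contributing $1/(1-z_k)$, plus the residue at the pole $z=1$. There $D'/D=1/(z-1)+D''(1)/(2D'(1))+O(z-1)$, so $\frac{D'}{D}\cdot\frac{1}{1-z}$ has a double pole. Its residue is $-D''(1)/(2D'(1))$, exactly the extra term above. Parametrizing $z=(1+\varepsilon)e^{i\varphi}$, so that $dz/(iz)=d\varphi$, gives the stated integral. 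Since $1/(1-z)$ is analytic inside apart from $z=1$, no other residues arise.

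\textbf{Main obstacle.} The delicate step is the zero-counting near $|z|=1$. It requires analyticity of $\beta$ beyond the unit disc and strictness of the Rouché inequality on the whole circle, not only near $z=1$. I would handle this by uniform continuity on compact arcs away from $z=1$, where $|\beta|<1$ strictly, together with the Taylor argument near $z=1$. A lattice block time would need to be treated separately or excluded.
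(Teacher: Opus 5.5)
Your proof follows the paper's route. You apply Rouch\'e on $|z|=r$ via $|\beta(\lambda(1-z))|\le\beta(\lambda(1-r))<r^K$, then factor the degree-$K$ numerator in \eqref{eq-pi_temp} through $z=1$ and the interior roots $z_1,\dots,z_{K-1}$. Finally you fold $\sum_k 1/(1-z_k)$ and the residue $-D''(1)/(2D'(1))$ at $z=1$ into a single contour integral. Two of your steps are cosmetic improvements. The logarithmic derivative avoids solving for the constant $a_{K-1}$. Noting that the strict inequality holds on every circle $|z|=r$ with $1<r\le 1+\varepsilon$ rules out zeros in the annulus more directly than the paper's isolated-zeros argument.

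The one thing to correct is the lattice caveat. It is unnecessary, and it would exclude the constant block time that the paper uses as a limiting case in Section~\ref{sec:numerical}.
\begin{itemize}
\item For $z=e^{i\varphi}$ with $\varphi\neq 0$, the argument $\lambda(1-z)$ has real part $\lambda(1-\cos\varphi)>0$. Hence $|e^{-\lambda(1-z)x}|=e^{-\lambda x(1-\cos\varphi)}<1$ for every $x>0$, and so $|\beta(\lambda(1-z))|<1$ for any $B$ not identically zero, lattice or not.
\item Equivalently, $\beta(\lambda(1-z))$ is the generating function of the Poisson count during a block, and that count charges every nonnegative integer.
\item The lattice obstruction you have in mind belongs to characteristic functions, where the argument is purely imaginary. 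That never happens here.
\end{itemize}
Your stated main obstacle is also already settled by your first step: the bound $|\beta(\lambda(1-z))|\le\beta(\lambda(1-r))$ holds uniformly on the whole circle $|z|=r$, so no arc-by-arc continuity argument is needed.

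Lastly, you only need $\Pi(z;0)$ analytic in the open unit disc. That is automatic, since its coefficients $\pi_n(0)$ are nonnegative with sum $\mu^{-1}$, and all the $z_k$ lie there. At $z=1$ the numerator vanishes trivially. Asserting analyticity on $|z|<1+\varepsilon$ a priori is unjustified, though harmless because you never use it.
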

\begin{proof}
For any $z \in \mathbb{C}$ on the circle of radius $r > 1$, say $S_r$, we observe 
$$
    \left |\beta(\lambda(1-z))\right|\leq \int_0^\infty |e^{-\lambda(1-z)x}f_B(x)|\rd x = \int_0^\infty e^{- \lambda x(1-r \cos \varphi)} f_B(x)\rd x
$$
where $z = r e^{i \varphi}$. This is in turn less than or equal to $\int_0^\infty e^{-\lambda x (1-r)} f_B(x) \rd x = \beta(\lambda(1-r))$. Using $|z^K| = r^K$ on $S_r$, we have the relationship $D(r) =  r^K - \beta(\lambda(1-r))$.
Note that $D(1) = 0$ and $D'(1) = K - \lambda \mu = (1 - \rho)K$. Since we assume that the system is not eplosive or $\rho < 1$, we have $D'(1) > 0$ and thus $D(r)  > 0$ for any $r$ sufficiently close to 1. For such $r$, we have $|\beta(\lambda(1-z))| < |z^K|$ on $S_r$. Then, by Rouch\'e's Theorem (one version of which we cite below to make the presentation self-contained), we conclude $z^K$ and $D(z)$ have the same number of zeros, i.e. $K$, inside $S_r$, counting multiplicities.

\begin{lem}[Rouch$\acute{\text{e}}$'s Theorem]
    Suppose two complex valued functions $f$ and $g$  are analytic inside some region $U$ with a  simple closed contour $\partial U$. If $|g(z)| < |f(z)|$ on $\partial U$, then $f$ and $f + g$ have the same number of zeros inside $U$, counting multiplicities. 
\end{lem}

Due to the analyticity of $D(z)$, its zeros are isolated. In particular, there are finitely many zeros on the compact set $\{ z \in \mathbb{C}: 1 \leq |z| \leq r\}$ for any $r > 1$. Thus by selecting a sufficiently small $\varepsilon > 0$, we can assure that any zero of $D(z)$ on $\{z \in \mathbb{C} : 1 \leq |z| \leq 1 + \varepsilon\}$ has the radius 1. Clearly, $D(1) = 0$. For $z \neq 1$ on the unit circle, we note 
$$
|\beta(\lambda(1-z))| \leq \int_0^\infty e^{-\lambda x(1- \cos \varphi)} f_B(x) \rd x
$$
is strictly less than $|z^K| = 1$ with $z = e^{i \varphi}$. Hence, $z=1$ is the only solution of $D(z) = 0$ on the unit circle. Further, we already noted above that $D'(1) > 0$. This implies that the multiplicity of $z = 1$ is 1. Consequently, there are $K-1$ solutions to $D(z) = 0$ within the unit circle and a simple unique solution $z=1$ on the unit circle.

Now we are ready to apply the solution procedure proposed in \cite{oblakova2019exact}. To be specific, let $\{z_i\}_{i=1}^{K-1}$ be the $K-1$ roots of $D(z)=0$ in the interior of the unit ball. We also define $a_k =\sum_{n=0}^k\int_0^\infty \pi_n(x)\eta(x)\rd x$, the partial sum of the unknowns for $k=0, 1, \ldots, K-1$. Then, it is easy to see that the right hand side of \eqref{eq-pi_temp} is equal to $(z-1) \sum_{n=0}^{K-1} a_n z^n$. Since $D(z_i) = 0$ for $i=1, \ldots, K-1$, it must be that 
$$
\sum_{n=0}^{K-1} a_n z^n = a_{K-1} \prod_{i=0}^{K-1} (z - z_i).
$$
This expression further yields 
$$
\Pi(1;0) =\lim_{z \rightarrow 1} \frac{z-1}{D(z)} a_{K-1} \prod_{i=0}^{K-1} (1-z_i) = \frac{a_{K-1}}{(1-\rho)K}  \prod_{i=0}^{K-1} (1-z_i).
$$
Since $\Pi(1;0) = \mu^{-1}$ from Lemma~\ref{lem-pgf}, we get $a_{K-1} = \mu^{-1} (1-\rho)K   \prod_{i=0}^{K-1} (1-z_i)^{-1}$ and finally, 
$$
\Pi(z; 0) = \frac{(1-\rho)K (z-1)}{\mu \; D(z)} \prod_{i=1}^{K-1} \frac{z-z_i}{1-z_i}.
$$

The computation of the expected queue length $Q(\rho)$ requires $\Pi'(1;0)$. For this, define $h(z) = \prod_{i=1}^{K-1} \frac{z-z_i}{1-z_i}$ for convenience, and observe 
\begin{eqnarray*}
        \Pi'(1;0) &=& \frac{(1-\rho)K}{\mu}  \lim_{z\to 1}\frac{D(z)-(z-1)D'(z)}{D(z)^2}\cdot h(1)+\frac{1}{\mu} h'(1)\\
        &=& \frac{1}{\mu}\left[ (1-\rho)K \lim_{z\to 1}\frac{D(z)-(z-1)D'(z)}{D(z)^2}+\sum_{i=1}^{K-1}\frac{1}{1-z_i}\right]\\
        &=& \frac{1}{\mu}\left[- \frac{1}{2} \lim_{z\to 1}\frac{D''(z)}{D'(z)}+\sum_{i=1}^{K-1}\frac{1}{1-z_i}\right]\\
        &=& \frac{1}{\mu}\left[ \frac{1}{2(1-\rho)} \left( K \rho^2 \left( 1 + \frac{\sigma^2}{\mu^2}\right) - (K-1) \right) +\sum_{i=1}^{K-1}\frac{1}{1-z_i}\right].
\end{eqnarray*}
We utilize a lemma of \cite{oblakova2019exact}, which we state below for the reader's convenience. 

\begin{lem}[\cite{oblakova2019exact}]\label{lem-oblakova}
    Consider a function $F(z)$ that is analytic in a neighborhood of $z_i$ , where $i = 0,\cdots, K - 1$. Then, the residue of the function $\displaystyle\frac{D'(z)}{D(z)}F(z)$ at $z_i$ is equal to
    \begin{equation*}
        \lim_{z\to z_i}\frac{D'(z)}{D(z)}F(z)(z-z_i)=m_{z_i}F(z_i)
    \end{equation*}
where $m_{z_i}$ is the multiplicity of the root $z_i$.
\end{lem}
Treating $S_{1 + \varepsilon}$ as a closed curve with counterclockwise orientation, the Residue Theorem says that the contour integral $\oint_{S_{1+\varepsilon}} \frac{D'(z)}{D(z)} \frac{1}{1-z}\rd z$ is equal to $2\pi i$ times the sum of residues at $z_i$'s for $i=1, \ldots, K-1$ as well as the residue at $z=1$. Lemma~\ref{lem-oblakova} then implies 
$$
\frac{1}{2\pi i}\oint_{S_{1+\varepsilon}} \frac{D'(z)}{D(z)} \frac{1}{1-z}\rd z = (\text{residue at 1})  + h'(1).
$$
The residue at 1 is found by repeatedly applying L'H\^opital's rule and given as follows:
\begin{eqnarray*}
\text{residue at 1} &=& \lim_{z \rightarrow 1} \frac{\rd}{\rd z} \left[ \frac{D'(z)}{D(z)} \frac{1}{1-z} (z-1)^2 \right] \\
&=& - \frac{D''(1)}{2D'(1)} \\
&=&  \frac{1}{2(1-\rho)}\left( K \rho^2 \left( 1+ \frac{\sigma^2}{\mu^2} \right) - (K-1) \right).
\end{eqnarray*}
Consequently, we obtain a succinct expression 
$$
\Pi'(1;0) = \frac{1}{2\pi i \mu} \oint_{S_{1+\varepsilon}} \frac{D'(z)}{D(z)} \frac{1}{1-z} \rd z.
$$
By re-writing $z = (1+\varepsilon) e^{i \varphi}$ for $ -\pi < \varphi \leq \pi$, we get the desired result. 
\qed
\end{proof}
The above proposition provides us with a numerical tool to evaluate the expected queue length when the block generation time is different from the benchmark Bitcon network. More specifically, we infer user behaviors or cost structure based on the Bitcoin network, and then utilize the result for optimal bidding strategies in other blockchain networks. 

\section{Strategic Bidding}\label{sec:bidding}

The previous section on expected queue length does not distinguish transaction requests in the mempool. However, miners in the network naturally select transactions with highest possible fees to maximize rewards given the constraint on the block size. In this sense, users compete with each other to make their transactions complete. Each user must determine the level of bidding by gauging the delay cost and the level of congestion. For this purpose, we need a model for the user's optimal decision. The main reference we follow is \cite{huberman2021monopoly} in which the net reward for the $i$-th user with delay cost $c_i$ is given by $R_i - b_i - c_i W_i$ where $R_i$ is the willingness-to-pay, $b_i$ is the bid, and $W_i$ is the expected waiting time of the user. Here $c_i$ is the cost per unit time until the user's transaction goes through. Also, $R_i$ is the benefit for the user in using the blockchain network over other alternatives that fulfill the same objective of the transaction. In \cite{huberman2021monopoly}, $R_i$ is assumed to be either low or high value and it is not correlated with the cost $c_i$. As such, this is non-essential for our purpose of studying the user behavior through bidding data, hence we only consider the utility of the form $-b_i - c_i W_i$.

In this queueing game where higher priorities are earned by higher bids, the expected waiting time $W_i$ is given by a function of the bid, say $\bW(b_i)$. When the user decides the bid optimally, we can even further write $b_i$ as a function of $c_i$, say $b_i = \bb(c_i)$ so that the net reward for the $i$-th user can be re-written as $- \bb(c_i) - c_i \bW(\bb(c_i))$. In addition, users are heterogeneous so that their delay costs $c$'s are assumed to have a cumulative distribution $F$ with density $f$. Since it is natural to think of the optimal bidding at equilibrium as a strictly increasing and continuous function of $c$, this leads to the distribution of bids $b$'s, say $G$ with density $g$. This means that $\mathbb{P}(\mathbcal{c}  > c_i) = \bar F(c_i)$ whereas $\mathbb{P}(\mathbcal{c}  > c_i) = \mathbb{P}(\mathbcal{b} > \bb(c_i)) = \bar G(\bb(c_i))$. Here $\mathbcal{c}, \mathbcal{b}$ are the random delay cost and equilibrium bid of the user population. When the $i$-th user arrives at the mempool with the equilibrium bid, the user sees the arrival rate of higher bids equal to $\lambda \bar G(\bb(c_i)) = \lambda \bar F(c_i)$. 

For the rest of the paper, we assume that $F$ and $G$ are continuous and strictly increasing for analytical tractability. We in addition make a notational change. Since each user experiences different levels of congestion due to priority, we use $\bar\rho = \lambda\mu/K$ the maximum possible load for the whole queue and leave $\rho$ for the generic load parameter. If we define a function $\brho(c_i)$ as the load parameter for the blockchain queue with arrival rate $\lambda \bar F(c_i)$, then $\brho(c_i) = \bar\rho \bar F(c_i)$ with state space $[0, \bar\rho]$. Since this function is continuous and invertible, we can assign the $i$-th user the corresponding load parameter, say $\rho_i = \brho(c_i)$. In other words, the priority of the $i$-th user in the original priority queue is represented by the load parameter $\rho_i$ where a lower value means a higher priority. We also note that any randomly selected user has a load parameter $\brho(\mathbcal{c})$ for the random delay cost. This implies that $\mathbb{P}(\brho(\mathbcal{c}) \leq \rho_i) = \mathbb{P}(\brho(\mathbcal{c}) \leq \brho(c_i)) = \mathbb{P}(\mathbcal{c} \geq c_i) = \bar F(c_i) = \rho_i / \bar\rho$. In other words, $\brho(\mathbcal{c})$ has a uniform distribution on $[0, \bar\rho]$.  

For our M/G\textsuperscript{$K$}/1 queuing of transactions but with the load parameter $\rho_i$, the expected waiting time for any user with the load parameter between 0 and $\rho_i$ is given as $Q(\rho_i)/  (\lambda \bar F(c_i))$ by Little's Law. Abusing notation, let us use $\bW(\rho)$ for the expected waiting time that any user with paritcular priority $\rho$ experiences. Since priorities are uniformly distributed and $\bar F(c_i) = \rho_i / \bar \rho$, we obtain 
$$
\frac{Q(\rho_i)}{\lambda \bar F(c_i)} = \frac{\bar\rho Q(\rho_i)}{\lambda \rho_i } = \frac{1}{\rho_i} \int_0^{\rho_i} \bW(\rho) \rd \rho.
$$
This implies that 
\begin{eqnarray*}
\bW(\rho_i) &=& \frac{\rd}{\rd \rho_i} \left[ \frac{\bar \rho}{\lambda} Q(\rho_i) \right] \\
&=& \frac{\mu}{K} Q'(\rho_i) \\
&=& \frac{\mu}{2}\left(1 + \frac{\sigma^2}{\mu^2} \right) + \frac{\mu}{2\pi K} \int_{-\pi}^\pi \bI(\rho_i, \varphi) \rd \varphi
\end{eqnarray*}
where $\bI(\rho_i, \varphi) = \left[ \frac{\rd}{\rd \rho_i} \frac{D'(z(\varphi))}{D(z(\varphi))} \right] \frac{z(\varphi)}{1 - z(\varphi)}$ for notational simplicity. In our numeical implementation, this function is calculated for a given block generation time distribution and enters into the above formula for numerical integration. 

Our ultimate goal is to understand uers' optimal bidding strategy $\bb$. For the reader's convenience, we summarize the procedure to arrive at the final expression of the function. First of all, recall that the user with delay cost $c_i$ has the utility $-b_i - c_i \bW(b_i)$ where $b_i$ is the bid of the $i$-th user, not necessarily optimal. Then, the first order condition to maximize the utility gives us the relationship 
\begin{equation}\label{eq:cost-bid}
\frac{\rd}{\rd b}\Big|_{b = \bb(c_i)} \bW = - \frac{1}{c_i}.
\end{equation}
On the other hand, the relationship between $\bb, \brho, F$ and $G$ implies that 
\begin{eqnarray*}
\frac{\rd}{\rd b}\Big|_{b = \bb(c_i)} \bW &=& \frac{\rd}{\rd \rho}\Big|_{\rho = \brho(c_i)} \bW \cdot \frac{\rd \rho}{\rd b}\Big|_{b = \bb(c_i)} \\
&=& - \frac{\rd}{\rd \rho}\Big|_{\rho = \brho(c_i)} \bW  \cdot  \frac{\bar\rho f(c_i)}{\bb'(c_i)}.
\end{eqnarray*}
Consequently, the desired formula is obtained by 
\begin{equation*}
\bb(c_i) = \bar\rho \int_0^{c_i} c  f(c) \frac{\rd}{\rd \rho}\Big|_{\rho = \brho(c)} \bW  \rd c.
\end{equation*}
Here, we assume $\bb(0) = 0$, i.e., zero bid when there is no impact of transaction delay. Integration by parts then yields 
\begin{eqnarray*}
\bb(c_i) &=& - c_i \bW(\brho(c_i)) + \int_0^{c_i} \bW(\brho(c)) \rd c \\
&= & - \frac{\mu c_i}{2\pi K} \int_{-\pi}^\pi \bI(\brho(c_i), \varphi) \rd \varphi + \frac{\mu}{2\pi K} \int_0^{c_i} \int_{-\pi}^\pi \bI(\brho(c), \varphi) \rd \varphi \rd c.
\end{eqnarray*}
\cite{huberman2021monopoly} derive a closed-form expression for $\bW$, using an exponential distribution for the block generation time. In our case of general distribution, the computation of $\bb$ is only slightly more involved but makes it possible to test various possibilities. For instance, the block generation time may have a very small variation with the target time 12 seconds as in the Ethereum network, or there may be a mining gap due to operating burden on miners as argued in \cite{carlsten2016instability}. Please see Section~\ref{sec:numerical} for details. 

\cite{kim2024mind} study the strategic behaviors of miners in a blockchain network with the proof-of-work consensus protocol. Miners are motivated to solve a complex puzzle for economic rewards consisting of block reward (e.g. 3.125 BTC at the time of this writing) and the accumulated fees for transactions in a new block. The next proposition guages this portion of economic incentives as a function of the elapsed time since the last block generation. We remind the reader that $\Pi(z;x)$ is actually a function of the load parameter or priority level $\rho$. For the sake of simplicity, we define $\xi(\rho) = \Pi'(1;0)$ and $\xi(\rho; x) = \Pi'(1;x)$.

\begin{prop}\label{prop-revenue}
The expected accumulated fee on the event that a block is generated by time $x$ is given by 
$$
\int_0^x m(c^*(y)) \bar F_B(y) \rd y + \frac{\bar \rho K}{\mu^2} \int_0^x \int_{c^*(y)}^\infty \bb(c) f(c) \rd c \times   y \bar F_B(y) \rd y
$$
where $m(u) = \frac{\bar\rho}{2\pi \mu} \int_{u}^\infty \int_{-\pi}^\pi \bI(\brho(c), \varphi) \bb(c) f(c) \rd\varphi \rd c$ and $c^*(x)$ is a solution to $\xi(\brho(c);x) = K f_B(x)$, if exists; otherwise, $c^*(x) = 0$. 
\end{prop}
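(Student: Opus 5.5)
The plan is to write the expected accumulated fee of a block produced at elapsed time $y$ as a function of $y$, and then average over $y\in[0,x]$ against the block-time law, giving $\int_0^x(\cdot)\bar F_B(y)\rd y$. The main tool is Lemma~\ref{lem-pgf}, which gives $\Pi(z;y)=\Pi(z;0)\bar F_B(y)e^{-\lambda(1-z)y}$ for every priority level. Applying it to the sub-queue of users with load parameter at most $\rho=\brho(c)$ (cost at least $c$) and differentiating at $z=1$ yields
\[
\xi(\brho(c);y)=\bar F_B(y)\Big(\xi(\brho(c))+\Pi(1;0)\lambda\bar F(c)\,y\Big)=\bar F_B(y)\Big(\xi(\brho(c))+\tfrac{K}{\mu^2}\brho(c)\,y\Big),
\]
using $\Pi(1;0)=\mu^{-1}$ and $\lambda\bar F(c)=K\brho(c)/\mu$. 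Thus $\xi(\brho(c);y)$ is the (defective) expected number of waiting transactions with cost at least $c$ at elapsed time $y$. This decomposition produces the two terms of the statement: the term $\bar F_B(y)\xi(\cdot)$ gives the first integral, and the linear-in-$y$ arrival term gives the second.

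Next I determine which transactions enter the block. Miners take the highest bids, i.e. the highest costs, up to capacity $K$. The marginal cost level $c^*(y)$ is therefore where the expected mass of higher-priority waiting transactions equals the block's capacity weighted by the generation density, namely $\xi(\brho(c);y)=Kf_B(y)$. If no such level exists, capacity is not binding and every waiting transaction is included, so $c^*(y)=0$. I will accept this mean-field threshold as the modeling device implicit in the statement rather than work with the random cutoff.

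The collected fee is then $\int_{c^*(y)}^\infty \bb(c)\times(\text{density in }c\text{ of waiting transactions})\,\rd c$. To get that density, I differentiate the cost-$c$ cumulative count in $c$. This uses $\frac{\rd}{\rd c}\brho(c)=-\bar\rho f(c)$ and the contour representation of $\Pi'(1;0)$ from Proposition~\ref{prop-queue_length}, differentiated in $\rho$ under the integral, which gives the integrand $\bI$. Differentiating the $\xi$ part produces $\frac{\bar\rho}{2\pi\mu}\int_{-\pi}^{\pi}\bI(\brho(c),\varphi)\rd\varphi\, f(c)$. Weighting by $\bb(c)$ and integrating over $c\ge c^*(y)$ gives $m(c^*(y))$, multiplied by $\bar F_B(y)$. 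Differentiating the arrival part gives density $\frac{\bar\rho K}{\mu^2}f(c)\,y\,\bar F_B(y)$, and weighting by $\bb(c)$ yields the second term. Integrating over $y\in[0,x]$ completes the formula.

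I expect the main obstacle to be justifying the interchanges: differentiating in $\rho$ inside the contour integral, and the Fubini steps over $c$, $\varphi$ and $y$. The contour-integral step requires choosing a single $\varepsilon$ that works uniformly for $\rho\in[0,\bar\rho]$, so that $D$ has no zeros on $S_{1+\varepsilon}$ for all such $\rho$. I would argue this by compactness together with the Rouché argument in Proposition~\ref{prop-queue_length}. The Fubini steps require integrability of $\bb(c)f(c)$. A secondary subtlety is sign and orientation bookkeeping, i.e. that decreasing $c$ increases $\rho$. I will check this against the exponential case of \cite{huberman2021monopoly}.
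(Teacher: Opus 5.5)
Your proposal is correct and follows the paper's proof essentially step for step: Lemma~\ref{lem-pgf} gives $\xi(\rho;y)=\xi(\rho)\bar F_B(y)+\frac{\rho K}{\mu^2}\,y\,\bar F_B(y)$, and the cutoff is the same mean-field condition $\xi(\brho(c^*);y)=Kf_B(y)$. The bid density then comes from differentiating in $c$ through $\brho'(c)=-\bar\rho f(c)$ and the contour representation of $\xi$ (producing $\bI$); weighting by $\bb(c)$ and integrating over $y\in[0,x]$ yields the two terms. Where you go beyond the paper is in tracking the sign of $\brho'$, which the paper silently absorbs when it writes $\bar\rho f(c)$, and in flagging the interchange and uniform-$\varepsilon$ issues, which the paper does not address.
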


\begin{proof}
By definition, we have 
$$
\xi(\rho) = \Pi'(1;0) = \frac{1}{2\pi \mu} \int_{-\pi}^\pi \frac{D'(z)}{D(z)} \frac{z}{1-z} \rd \varphi.
$$
Lemma~\ref{lem-pgf} implies that 
\begin{eqnarray*}
    \xi(\rho;x) 
    &=& \Pi'(1;0) \bar F_B(x) + \lambda x \Pi(1;0) \bar F_B(x) \\
    &=& \xi(\rho) \bar F_B(x) + \frac{\rho K}{\mu^2} x \bar F_B(x). 
\end{eqnarray*}
This function represents the expected queue length of all the bids of the users whose delay costs are larger than or equal to $c$ such that $\brho(c) = \rho$, i.e. $\xi(\rho; x) \rd x = \mathbb{E} [ N ; x < X \leq x +\rd x]$. A miner is assumed to collect the best $K$ bids in the mempool, and such bids are transaction requests with delay cost greater than or equal to $c^*$ or priority less than or equal to $\rho^* = \brho(c^*)$ where $\xi(\rho^*; x) = K f_B(x)$. We note that there might be no solution for given $x$ if the maximum load parameter $\bar\rho$ is small. In such a case, a miner would collect all of the existing bids and we set $c^*(x) = 0$. 

Now we observe that the expected number of bids with corresponding delay costs in $(c, c+\rd c)$ when $X \in (x, x+\rd x]$ is given by 
\begin{eqnarray*}
    \xi(\brho(c+\rd c); x) - \xi(\brho(c); x) &=& \xi'(\brho(c); x) \brho'(c) \rd c.
\end{eqnarray*}
Since all such bids are close to $\bb(c)$, the expected fee for such bids is given by $\bb(c)\xi'(\brho(c); x)\brho'(c)\rd c$. Consequently, 
\begin{eqnarray*}
    R(x) &=& \int_{c^*(x)}^\infty \bb(c) \xi'(\brho(c); x) \bar\rho f(c) \rd c \\
    &=& \int_{c^*(x)}^\infty \bb(c) \xi'(\brho(c)) \bar\rho f(c) \rd c \times \bar F_B(x) + \frac{\bar\rho K}{\mu^2} x\bar F_B(x) \cdot \int_{c^*(x)}^\infty \bb(c)  f(c)\rd c.  
\end{eqnarray*}
Note that $m(c^*(x)) = \int_{c^*(x)}^\infty \bb(c) \xi'(\brho(c)) \bar\rho f(c) \rd c$. By integrating this quantity over the interval $[0,x]$, we get the desired result. 

\qed
\end{proof}

The function $R(x)$ in the proof is the expected fees on the event $X \in (x, x+ \rd x]$. The conditional expected fee given $X= x$ is therefore obtained by dividing $R(x)$ by $f_B(x)$. For instance, the amount of initial expected fees from which the total fees start to accumulate is $m(c^*(0))/f_B(0)$. On the other hand, a simple upper bound of $R(x)$ is given by 
$$
R(0) \bar F_B(x) + \frac{\bar \rho K}{\mu^2} x \bar F_B(x) \mathbb{E}[\mathbcal{b}].
$$
Here we use the property that $c^*(x)$ is increasing in $x$. Therefore, the total expected fees until the new block generation is bounded above by 
$$
R(0) \int_0^\infty \bar F_B(x) \rd x + \frac{\bar\rho K}{\mu^2} \mathbb{E}[\mathbcal{b}] \int_0^\infty x \bar F_B(x) \rd x = R(0) \mu + \frac{\bar\rho K}{2}  \left(1 + \frac{\sigma^2}{\mu^2}\right)\mathbb{E}[\mathbcal{b}].
$$
It is worth noting that this is close to the true value in Proposition~\ref{prop-revenue} as $K$ increases to infinity. Hence, we may use the bound as an approximation to the desired quantity as long as the service size $K$ is large enough. For a last comment before we move onto numerical experiments, we see that the total expected fee for a miner for a large $K$ consists of the $R(0)$ part accumulated over the block generation time $B$ and the expected bid $\mathbb{E}[\mathbcal{b}]$ part from the number of bids in the queue, i.e. $Q(\bar\rho)$. 

\section{Numerical Analysis}\label{sec:numerical}

In this section, we apply theoretical results in previous sections to the Bitcoin network. We specifically target the derivation of users' delay cost structure, and see how users would react to changes in block distribution. Furthermore, we test how such delay cost structures differ across chains. In doing so, the equation \eqref{eq:cost-bid} is essential. We provide details in the following subsections. 


\subsection{Cost Distribution of Bitcoin Users}
\begin{figure}[!ht]
    \centering
    \begin{subfigure}[b]{0.48\textwidth}
        \includegraphics[width=\textwidth]{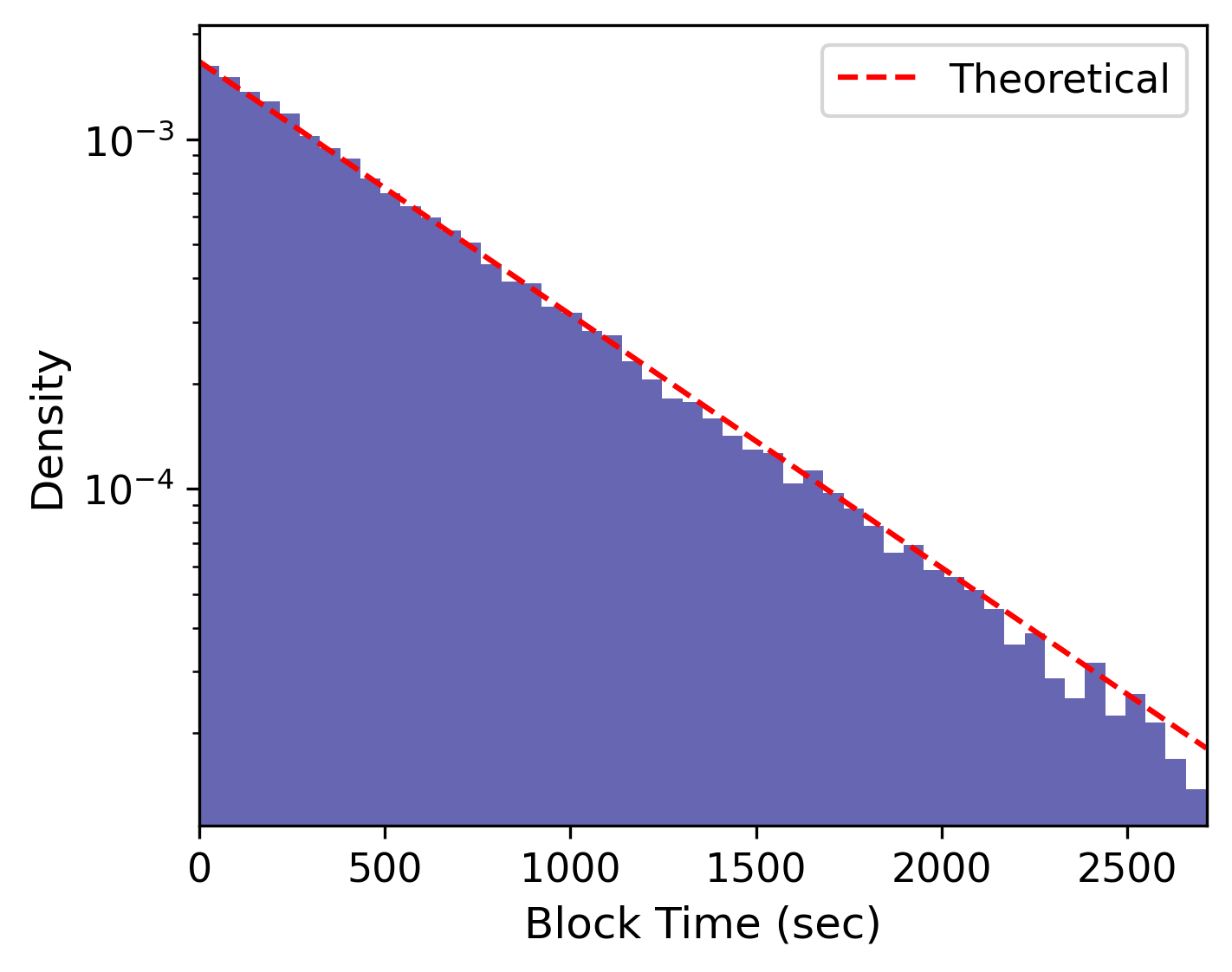}
        \caption{Block time distribution}
        \label{fig:btc_blocktime}
    \end{subfigure}
    \hfill
    \begin{subfigure}[b]{0.48\textwidth}
        \includegraphics[width=\textwidth]{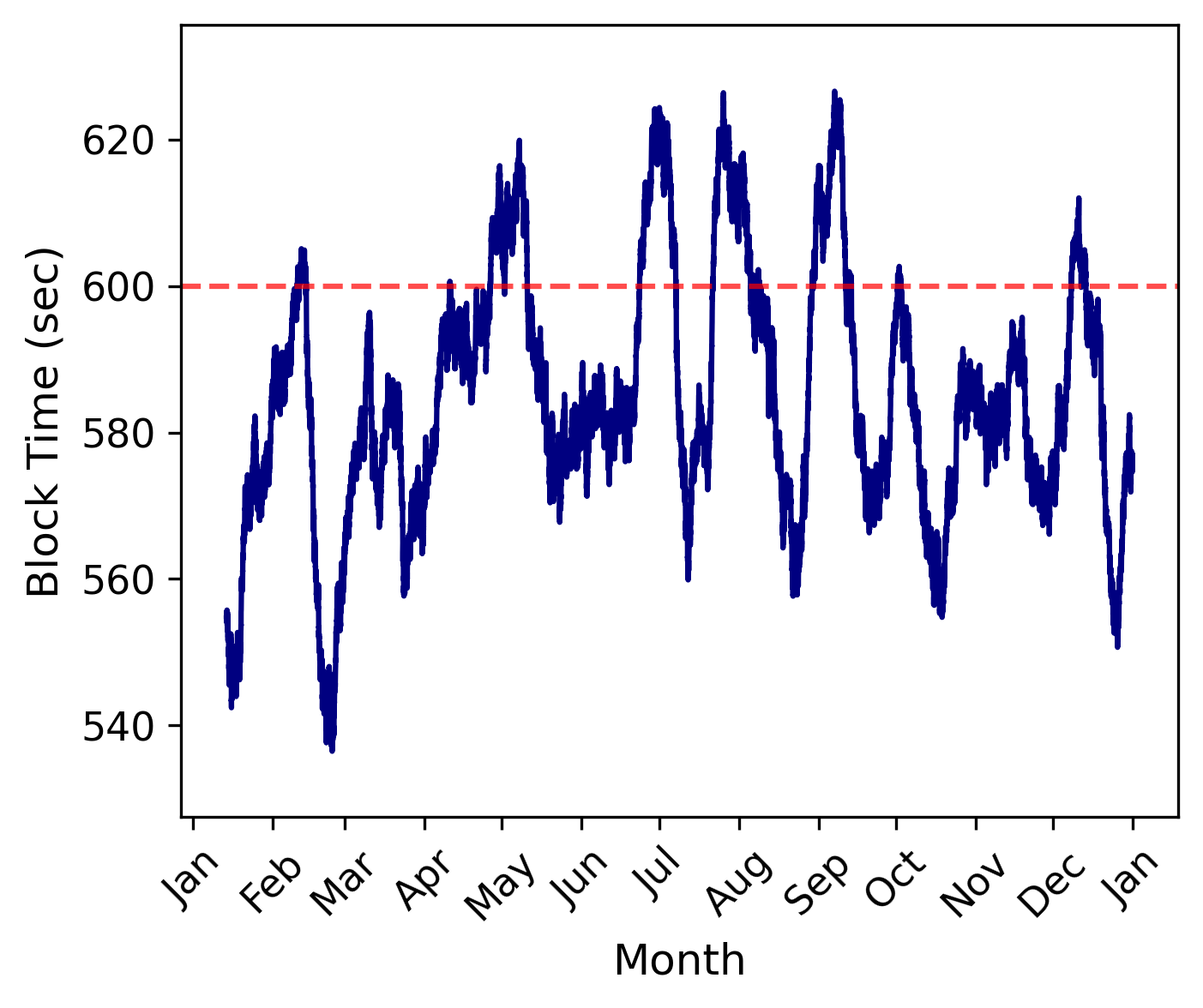}
        \caption{Block time stability}
        \label{fig:btc_blockstability}
    \end{subfigure}
    \hfill
    \begin{subfigure}[b]{0.48\textwidth}
        \includegraphics[width=\textwidth]{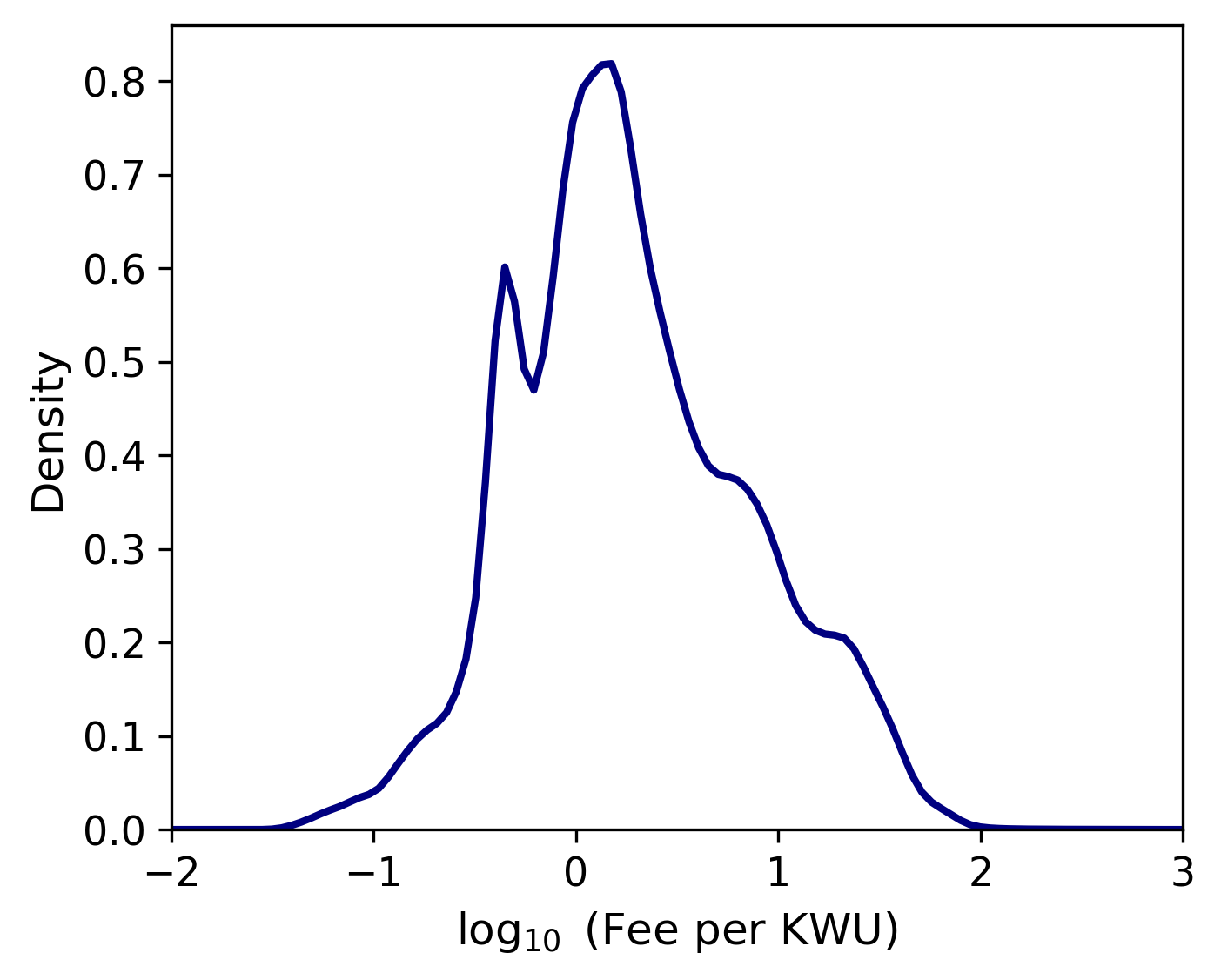}
        \caption{Fee distribution}
        \label{fig:btc_feedist}
    \end{subfigure}
    \caption{Bitcoin network characteristics in 2023}
    \label{fig:btc_analysis}
\end{figure}

We analyze Bitcoin network data from 2023 to understand the empirical characteristics of block generation and transaction processing. The network generated 53,852 blocks during this period, processing approximately 153.4 million transactions. The model parameter $K$ is computed by dividing the block size by the average transaction size, and it is 2,922. For the parameter $\bar\rho$, we divide the average number of transactions per block by $K$, and it is 0.97462. These values represent the actual throughput of the Bitcoin network and form the basis of our waiting cost analysis.

The empirical analysis of the Bitcoin network is shown in Figure~\ref{fig:btc_analysis}. The block time distribution in Figure~\ref{fig:btc_blocktime} plots the empirical block generation intervals, where block time is calculated as $t_i - t_{i-1}$ for timestamp $t_i$ of block $i$. The straight dashed line represents the density of the exponential distribution with mean 600 seconds in log scale. This theoretical model aligns well with the observed block time distribution. Figure~\ref{fig:btc_blockstability} presents the moving average of block generation times using a window size of 2016 blocks, corresponding to Bitcoin's difficulty adjustment period. The average annual block time has remained stable around 10 minutes throughout the year, so we set $\mu=10$ in our model. Figure~\ref{fig:btc_feedist} shows the smoothed probability density of logarithmic transaction fees per 1,000 weight unit (KWU) in USD. It exhibits two distinct peaks and this suggest the presence of user groups with different sensitivities to transaction costs.

\begin{figure}[!ht]
    \centering
    \begin{subfigure}[b]{0.48\textwidth}
        \includegraphics[width=\textwidth]{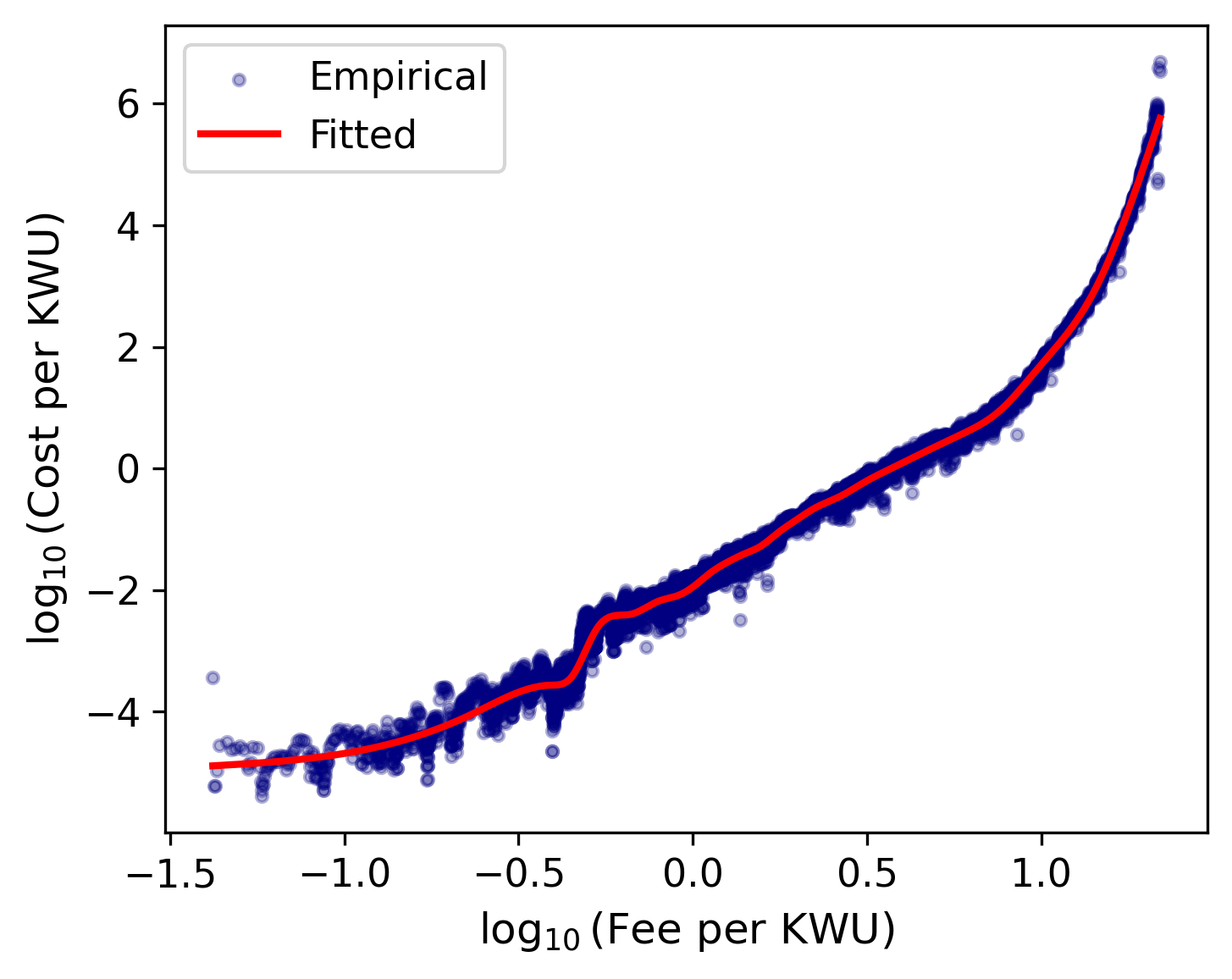}
        \caption{Cost-bid relationship}
        \label{fig:btc_costbid}
    \end{subfigure}
    \hfill
    \begin{subfigure}[b]{0.48\textwidth}
        \includegraphics[width=\textwidth]{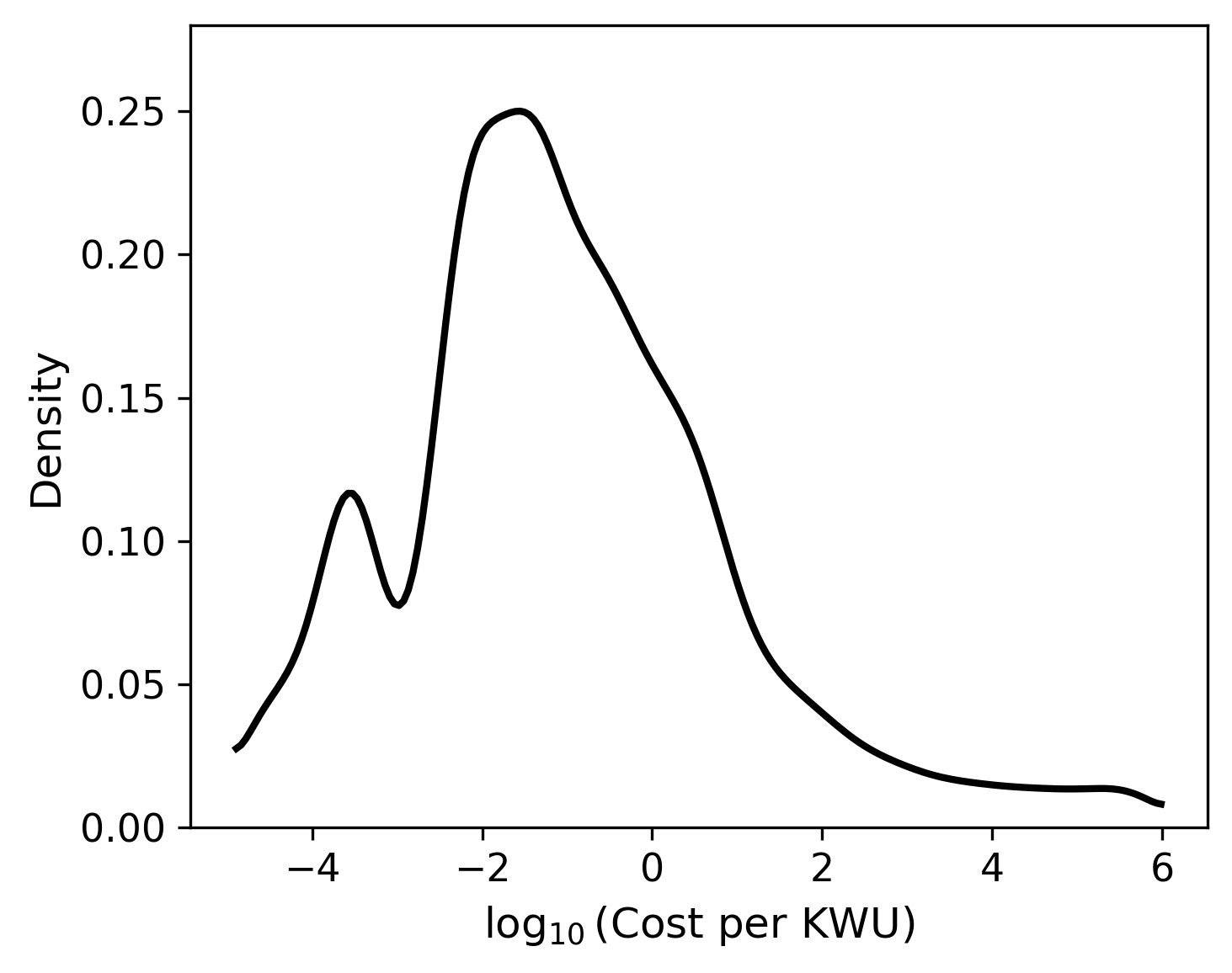}
        \caption{Cost distribution}
        \label{fig:btc_cost_dist}
    \end{subfigure}
    \hfill 
    \begin{subfigure}[b]{0.48\textwidth}
        \includegraphics[width=\textwidth]{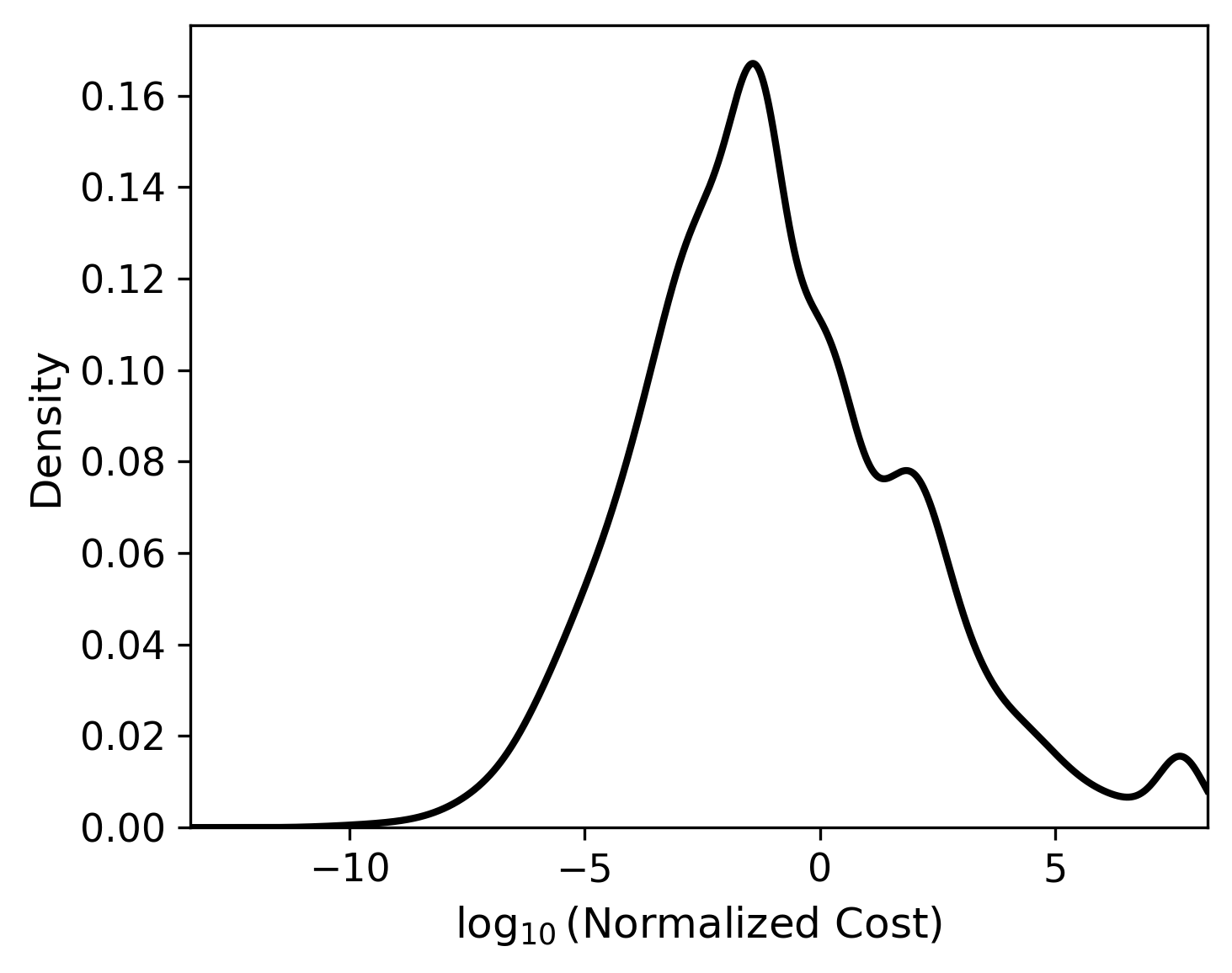}
        \caption{Normalized cost distribution}
        \label{fig:btc_cost_dist_all}
    \end{subfigure}
       \caption{The cost structure of Bitcoin users}
    \label{fig:btc_cost_analysis}
  \end{figure}

As explained at the beginning of this section, we directly estimate the bid-cost relationship using Equation~\eqref{eq:cost-bid}. Recall $\bW = \bW(\rho_i)$ for the $i$-th user. Also recall that the load parameter $\rho_i$ at equilibrium satisfies $\rho_i = \brho(c_i) = \bar\rho  \bar F(c_i) = \bar\rho \bar G(\bb(c_i))$. Given the actual fees and their empirical distribution function, we can apply the finite difference method to the left hand side of \eqref{eq:cost-bid} using the formula of $\bW$ by bumping $b_i$ (and thus $\rho_i$) up and down. Figure~\ref{fig:btc_costbid} presents the empirical relationship between fees and costs. More specifically, blue dots are pairs of an actual fee and the corresponding cost estimate. In order to ensure the monotonicity of cost versus fee (or vice versa), we apply the cubic spline method by suitably discretizing the $x$-axis. The red curve in the figure shows the fitted curve.  It clearly shows nonlinearity with costs sharply rising for larger fees. Figure~\ref{fig:btc_cost_dist} gives the empirical density of estimated costs. Note costs are in USD per KWU and per minute. Due to the bimodal feature of the fee distribution in Figure~\ref{fig:btc_feedist}, the cost distribution has two peaks as well at $10^{-4}$ and $10^{-2}$ approximately. 

To better reflect users' waiting cost and to make the cross-chain analysis useful, we introduce the concept of normalized cost by taking into account transaction specific attributes. Each transaction request is assigned a weight which captures computational and storage requirements (other networks in our paper use the storage requirement only). The transfer amount of each request may impact the bid because users are presumably sensitive to the value of the transaction. Lastly, we also adjust costs by expected block time because block generation times are different across chains. As a result, we apply
$$
c_{\sf normalized} = \frac{\text{expected block time}}{\text{transfer amount} \times \text{weight}} \times c.
$$
This standardization measures the user cost per USD transferred per weight or size during a single block generation time. The resulting distribution in Figure~\ref{fig:btc_cost_dist_all} shows a markedly different pattern from the previous cost structure. The normalized cost distribution exhibits a unimodal shape with its peak near 0.1, spanning a wider range from -10 to 8 in log scale. This normalization demonstrates that the bimodality in the distribution of cost per KWU emerges from transaction-specific parameters rather than underlying user preferences.

\begin{figure}[!ht]
\centering 
  \begin{subfigure}[b]{0.48\textwidth}
        \includegraphics[width=\textwidth]{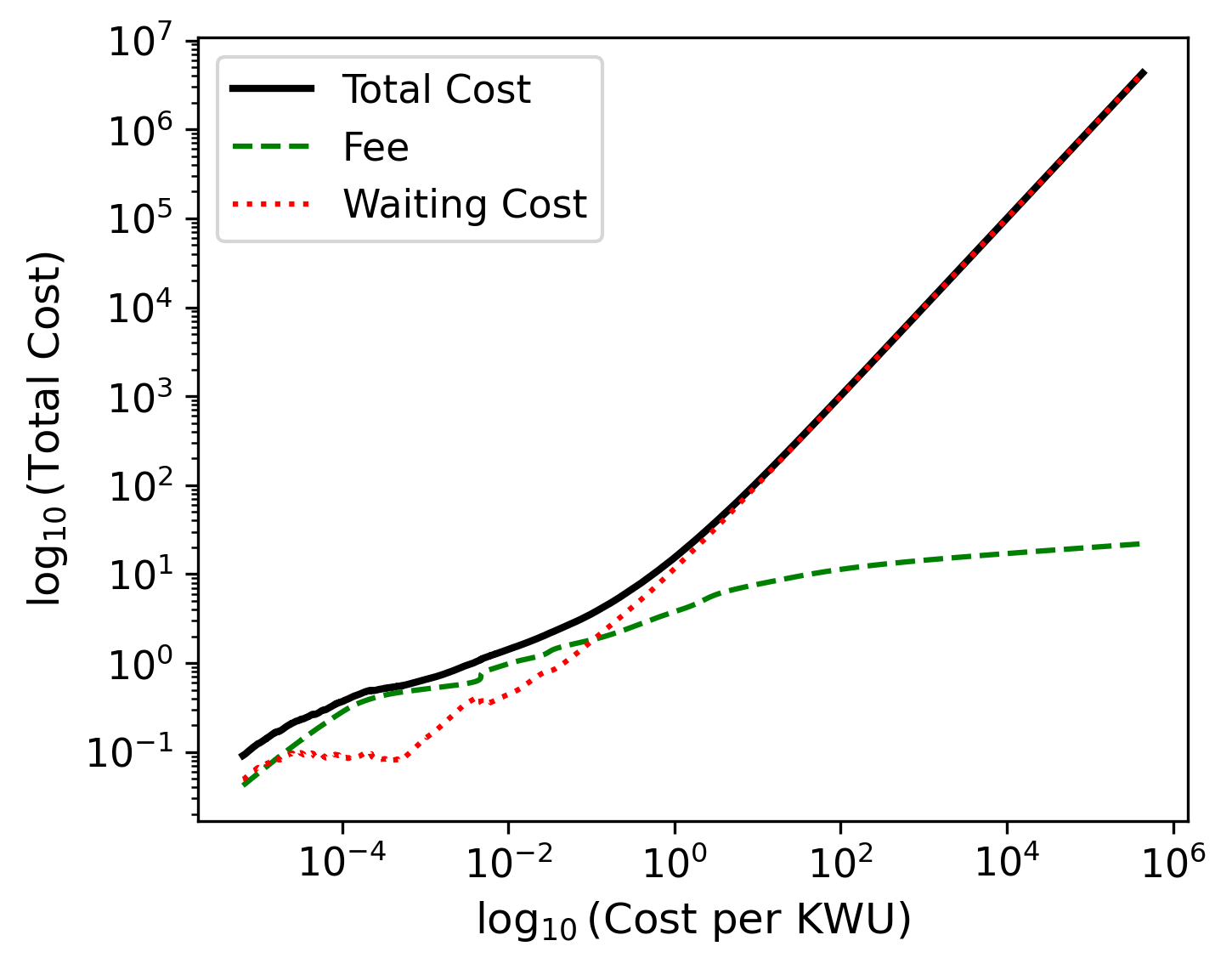}
        \caption{Total cost decomposition}
        \label{fig:btc_total_cost}
    \end{subfigure}
    \begin{subfigure}[b]{0.48\textwidth}
        \includegraphics[width=\textwidth]{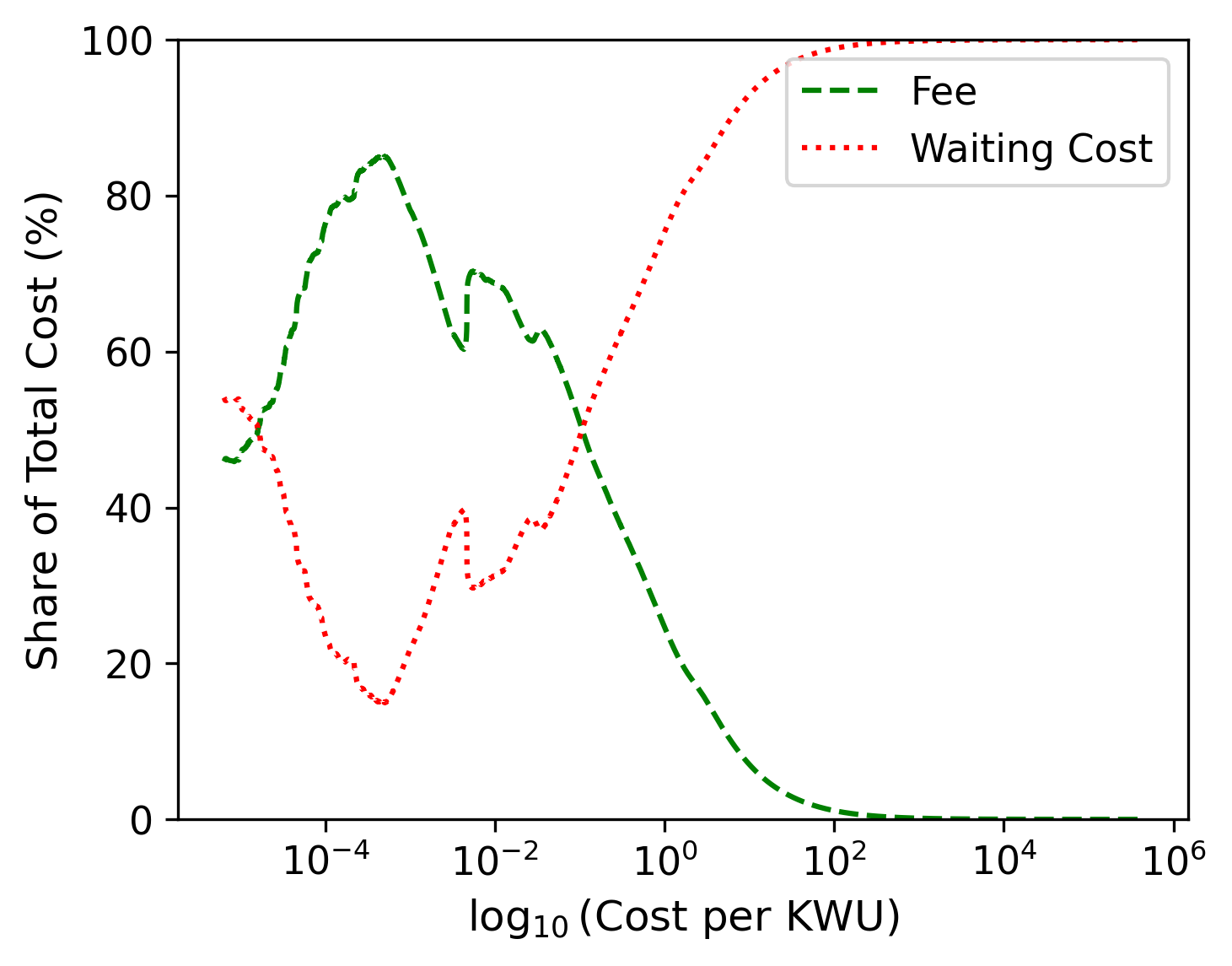}
        \caption{Cost ratio analysis}
        \label{fig:btc_cost_ratio}
    \end{subfigure}
    \caption{Analysis of total cost}
    \label{fig:btc_cost_all}
\end{figure}

Users' total costs are composed of this explicit transaction fee and the expected waiting cost, that is,  $b + c \bW$. As seen in Figure~\ref{fig:btc_total_cost}, we see the relative contribution of fee and waiting cost to the total cost. When a user has a high cost greater than 100 USD/KWU$\cdot$min, the waiting cost is a dominant factor whereas fees are dominant for low cost users (less than 0.01 USD/KWU$\cdot$min). See Figure~\ref{fig:btc_cost_ratio}. One possible explanation is that high cost users optimize their transaction requests by strategic bidding.


\subsection{Optimal Response in Bidding}

Having established the empirical cost structure $F$, we are now able to see how our model yields optimal bids if there is any change in block distribution $F_B$. The baseline scenario of the Bitcoin network is that the block distribution is exponential with mean 10 minutes. Let us consider different block distributions with the same mean. More specifically, we use the gamma distribution $\Gamma(\alpha, \alpha/10)$ parameterized by shape parameter $\alpha$ and rate parameter $\alpha/10$. The mean stays at 10 but the variance $100/\alpha$. We choose three cases: high variance ($\alpha=0.2$, variance=500), moderate variance ($\alpha=1$ or exponential, variance=100), and low variance ($\alpha=5$, variance=20). The block distribution converges to a constant block distribution time as $\alpha \rightarrow \infty$, which is included as a limiting case.  

\begin{figure}[!ht]
    \centering
    \begin{subfigure}[b]{0.48\textwidth}
    \includegraphics[width=\textwidth]{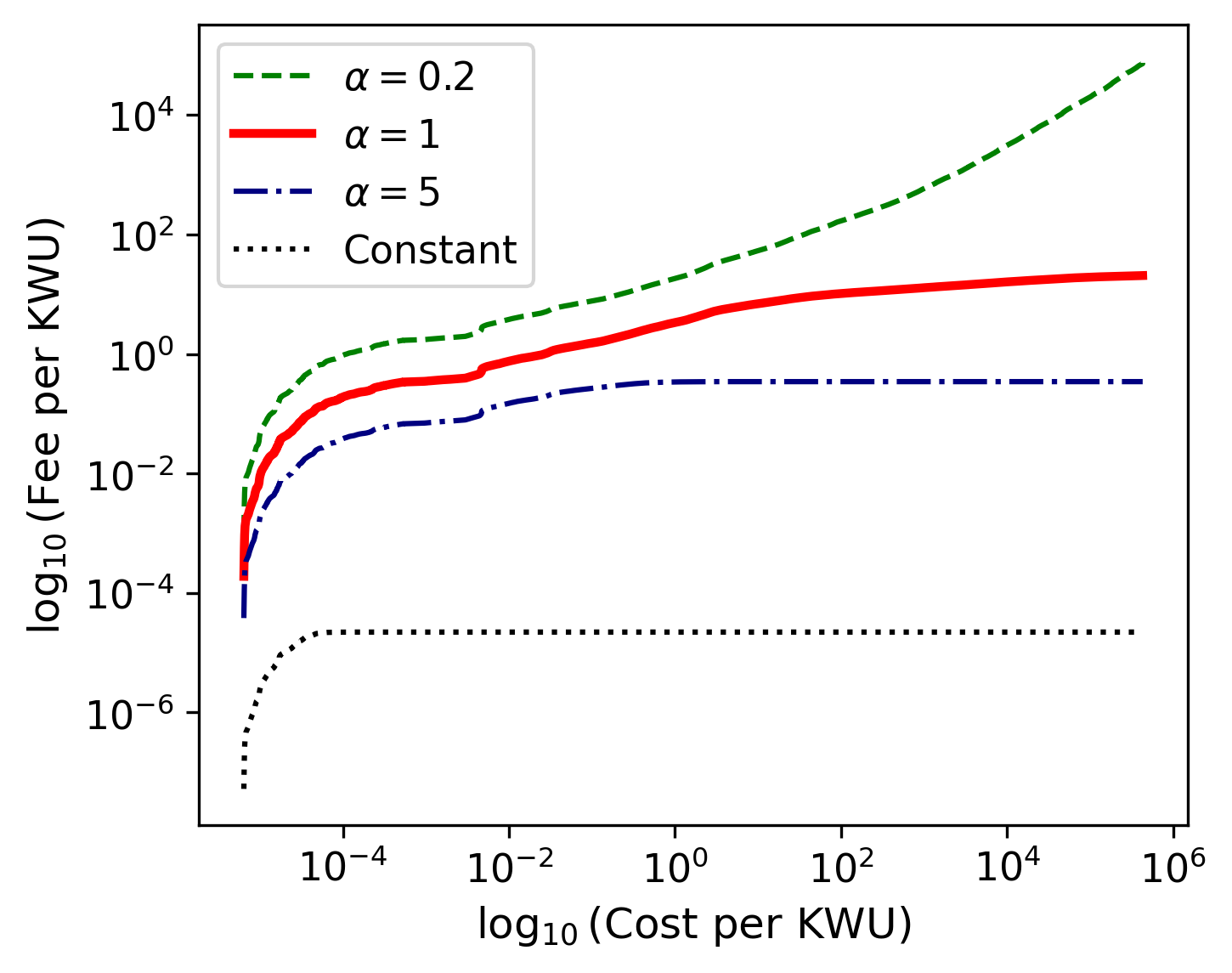}
    \caption{Cost-bid relationship}
    \label{fig:gamma_costbid}
    \end{subfigure}
    \hfill
    \begin{subfigure}[b]{0.48\textwidth}
    \includegraphics[width=\textwidth]{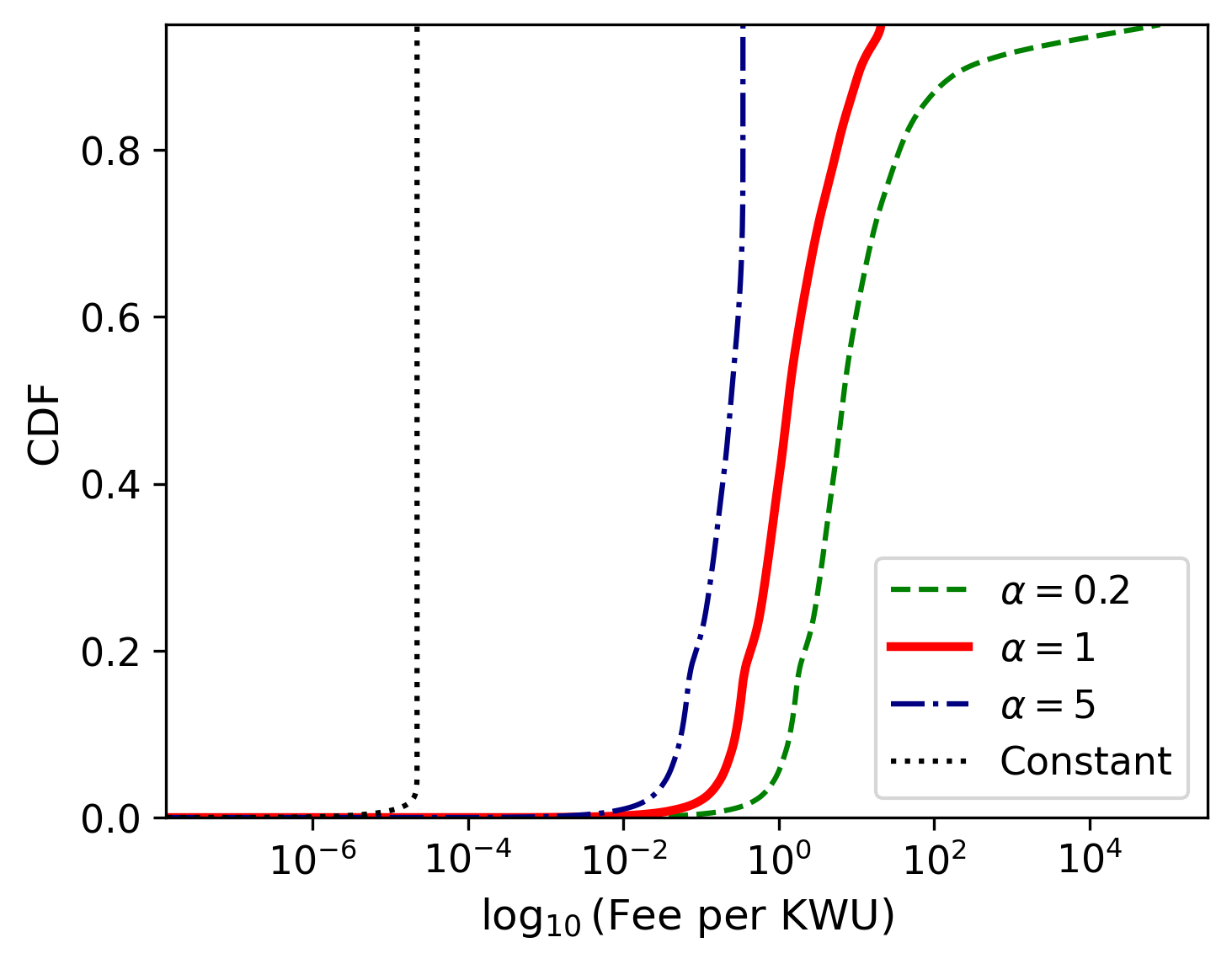}
    \caption{Bid distribution}
    \label{fig:gamma_bidcdf}
    \end{subfigure}
    \hfill
    \begin{subfigure}[b]{0.48\textwidth}
    \includegraphics[width=\textwidth]{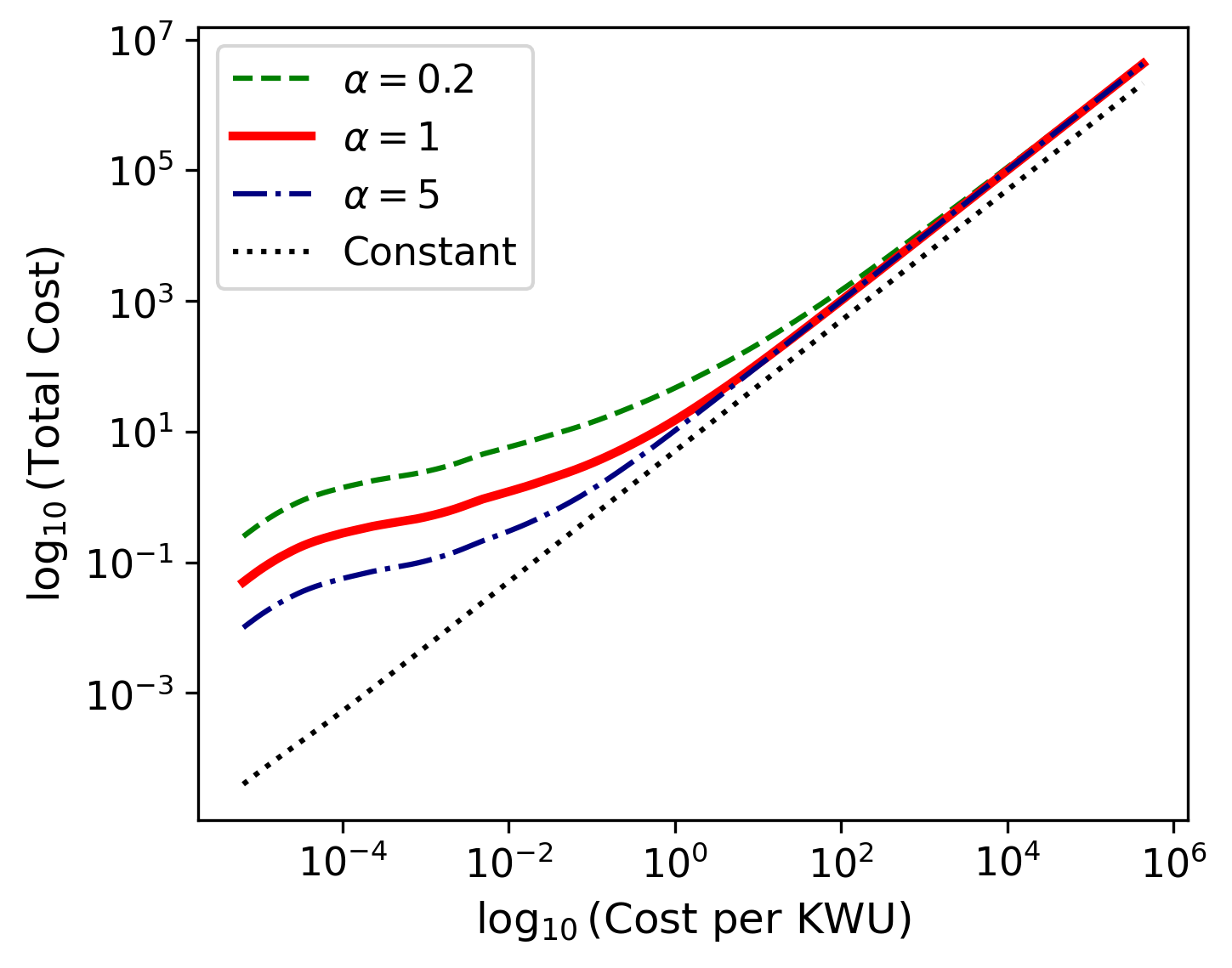}
    \caption{Comparison of total costs}
    \label{fig:gamma_totalcost}
    \end{subfigure}
    \caption{Impact of block time distribution on transaction costs}
    \label{fig:gamma_analysis}
\end{figure}

Figure~\ref{fig:gamma_analysis} shows three computational results. From Figure~\ref{fig:gamma_costbid}, we see optimal bidding strategies in terms of cost. The high variance case ($\alpha=0.2$) exhibits significantly higher bids across all cost levels, particularly for users with high delay costs. This suggests that the increased uncertainty in block generation time makes users bid higher to minimize the total cost. In the limiting case, users make the lowest and mostly constant bid at different cost levels. This is a predictable outcome because users can precisely predict transaction confirmation times in this extreme case. This decreasing pattern of variability in bidding as $\alpha$ increases is also found in the cumulative distribution of bids in Figure~\ref{fig:gamma_bidcdf}.  The observation that higher variability in $B$ results in larger bids and thus to larger total costs is re-confirmed in the last panel. 

The analysis reveals that block time uncertainty significantly impacts optimal bidding strategies, with higher variance leading to more aggressive bidding behavior. However, it is also seen that such responses to variability in block distribution are disproportionate at different cost levels. In order to understand such patterns in response better, we examine two other scenarios. First, we examine the impact of mining delay, which might happen due to optimal strategic choices of miners \citep{carlsten2016instability, kim2024mind}. Second, the impact of increased average block generation time is studied. This might occur due to difficulty attacks as seen in the Bitcoin vs Bitcoin Cash hash power competition or external shocks (e.g. mining restrictions in major mining regions). 

\begin{figure}[!ht]
    \centering
    \begin{subfigure}[b]{0.48\textwidth}
    \includegraphics[width=\textwidth]{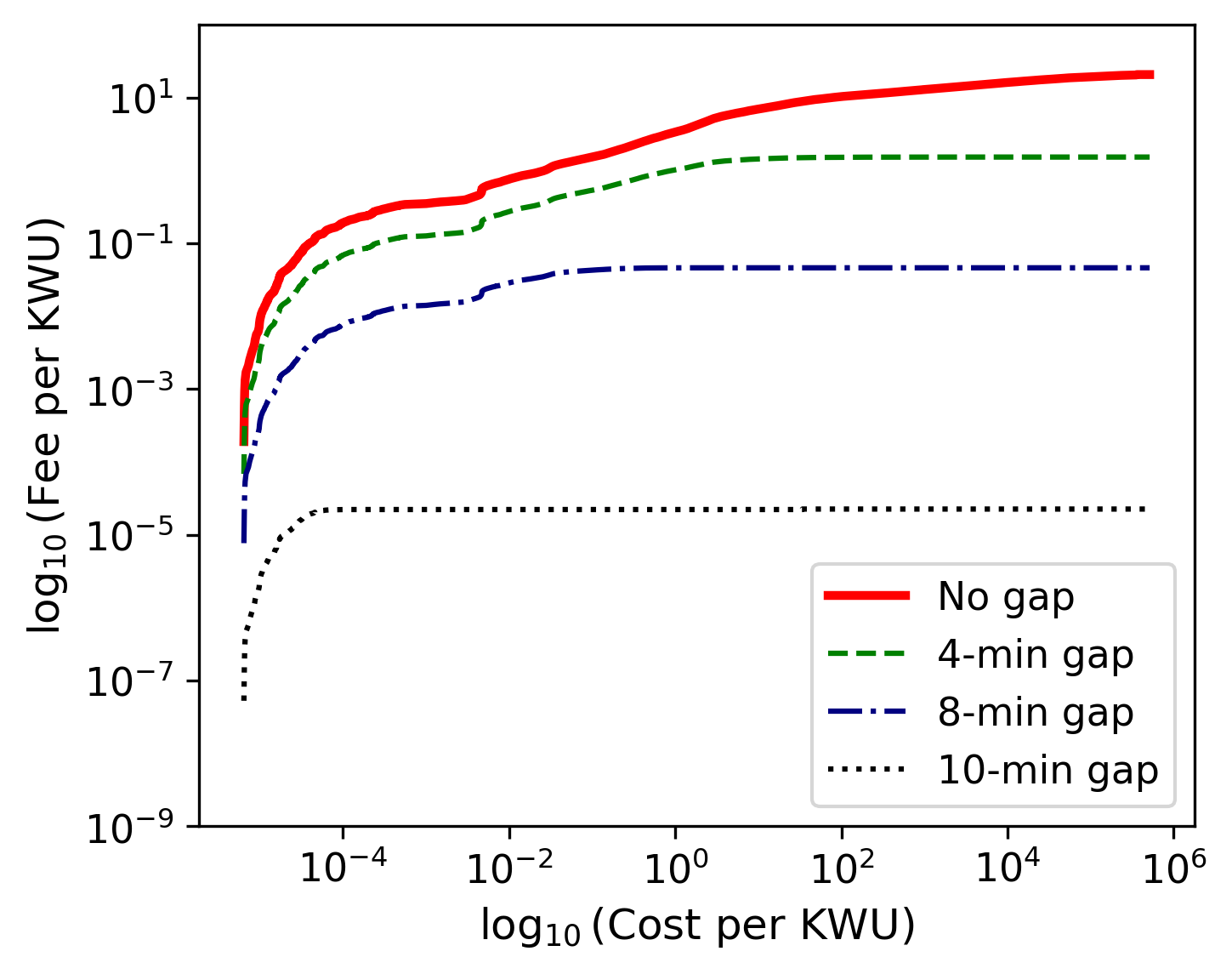}
    \caption{Cost-bid relationship}
    \label{fig:gap_costbid}
    \end{subfigure}
    \hfill
    \begin{subfigure}[b]{0.48\textwidth}
    \includegraphics[width=\textwidth]{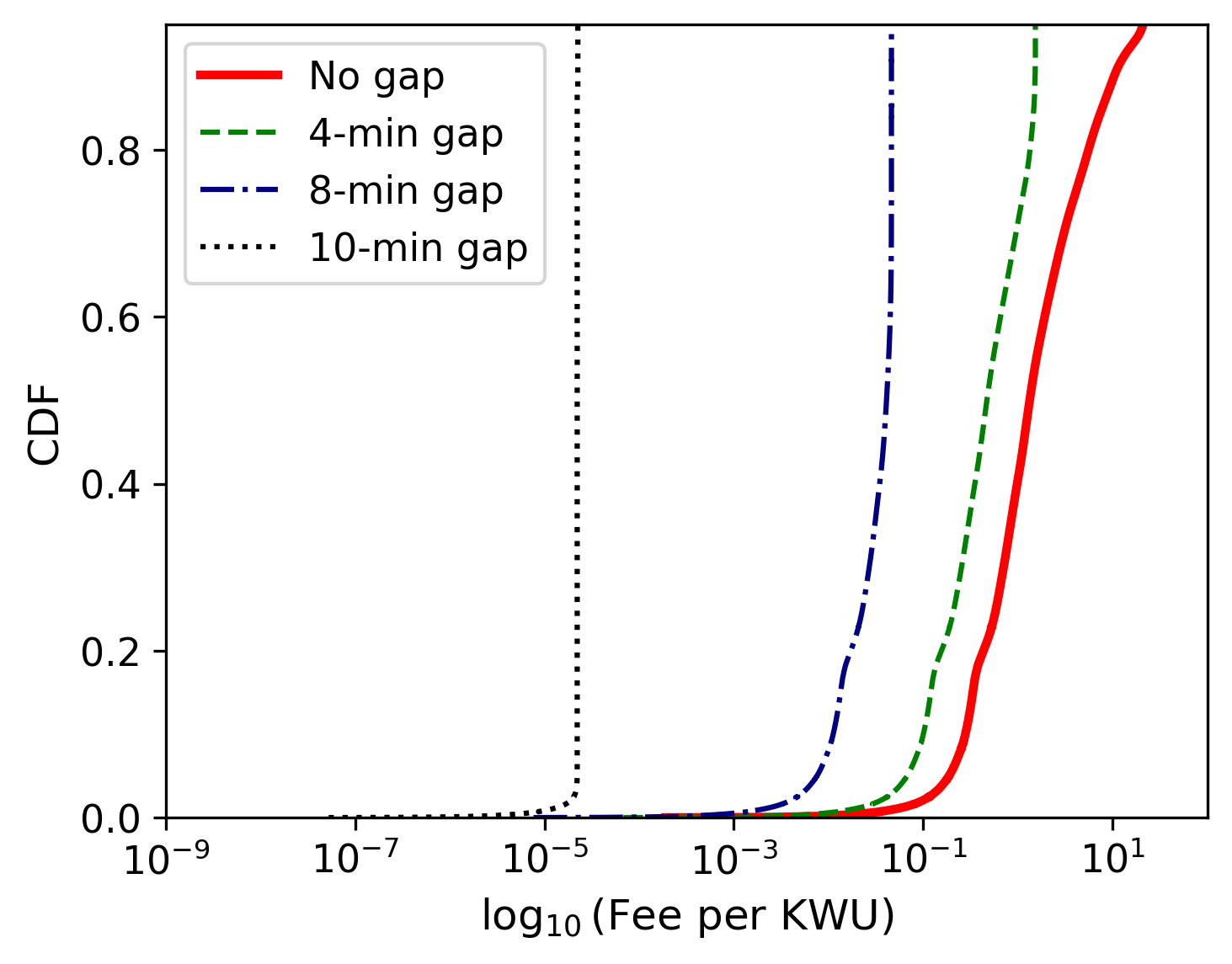}
    \caption{Bid distribution}
    \label{fig:gap_bidcdf}
    \end{subfigure}
    \caption{Impact of mining gaps on transaction costs}
    \label{fig:gap_analysis}
\end{figure}

The first scenario is implemented by imposing constant delays (baseline or no gap, 4 minutes, 8 minutes, and 10 minutes) in block generation. The results are shown in Figure~\ref{fig:gap_analysis}. The left panel shows a systematic decline in bidding as the mining gap increases. With no mining gap, the expected fee per KWU is 2.87 USD, which drops to 0.61 USD with a 4-minute gap and further decreases to 0.033 USD with an 8-minute gap. This decline reflects the reduced willingness of users to pay high fees for static delays. The variance in bids is also significantly reduced as shown in the right panel. This indicates that mining gaps lead to more predictable and compressed fee markets, potentially harming the effectiveness of fee-based transaction prioritization. Considering that mining gaps may occur due to insufficient mining rewards, this reduced fee may exacerbate the situation. 

\begin{figure}[!ht]
    \centering
    \begin{subfigure}[b]{0.48\textwidth}
        \includegraphics[width=\textwidth]{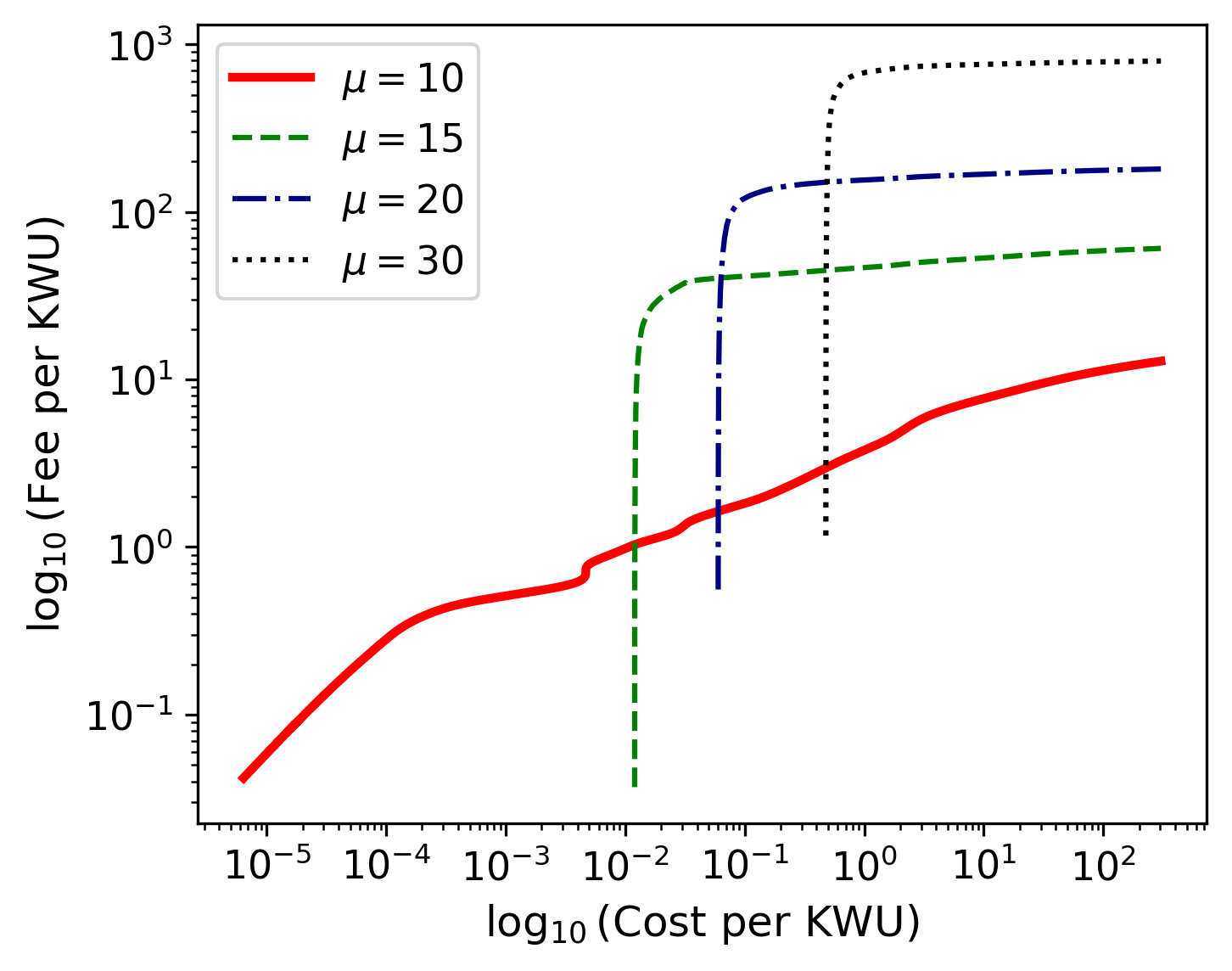}
        \caption{Cost-bid relationship}
        \label{fig:delay_mu_costbid}
    \end{subfigure}
    \hfill
    \begin{subfigure}[b]{0.48\textwidth}
        \includegraphics[width=\textwidth]{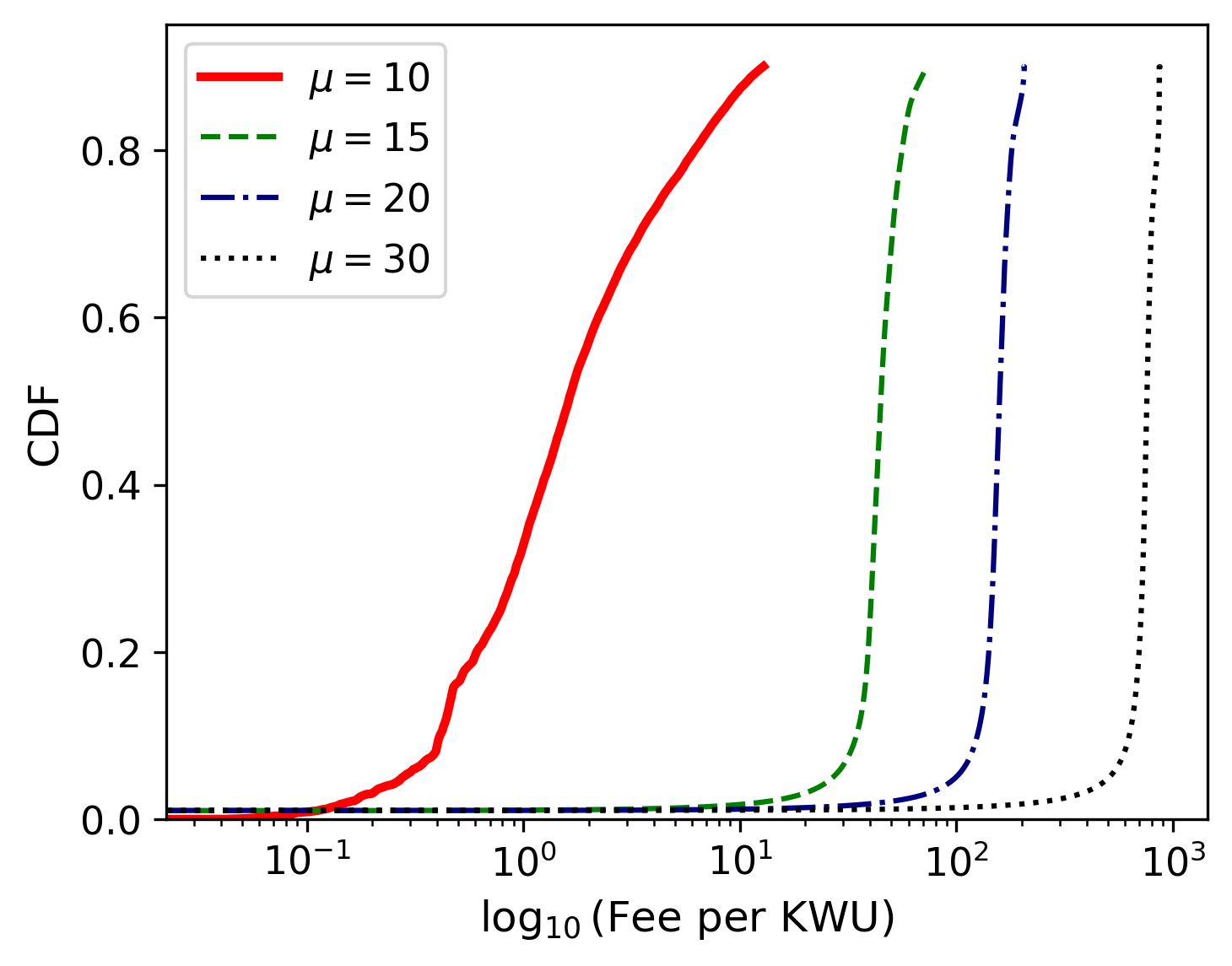}
        \caption{Bid distribution}
        \label{fig:delay_mu_bidcdf}
    \end{subfigure}
    \caption{Impact of extended block times on transaction costs}
    \label{fig:delay_mu_analysis}
\end{figure}

The second scenario is implemented by varying the mean parameter $\mu$ of the exponential distribution while maintaining its memoryless property. Figure~\ref{fig:delay_mu_analysis} delivers the same information of cost-bid relationship and bid distribution as in previous figures. The left panel demonstrates that longer block times lead to systematically higher optimal bids across all cost  levels. As $\mu$ increases from 10 (baseline case) to 30 minutes, the mean bid rises from 2.45 USD  to 732.93 USD. This substantial increase is in stark contrast of the significant decrease in the first scenario where users experience a fixed delay. The right panel can also be compared with Figure~\ref{fig:gap_bidcdf}. Optimal bidding becomes more homogeneous as the mean block generation time increases, and the coefficient of variation changes from 1.087 in the baseline case to 0.155 at $\mu=30$. 

These findings have implications for blockchain security and protocol design. Different block distributions have different impacts on optimal bidding. When there is a larger uncertainty or larger mean waiting time, bid levels are higher. This could impose substantial costs on network users in terms of waiting time and transaction fee,   potentially threatening the network's utility as a payment system. When there is a mining gap, on the other hand, bid levels are lower but this might lead to insufficient economic rewards for miners. 
Lastly, the increased homogeneity in bidding behavior under large mining gaps or extended block generation times indicates a potential reduction in the fee market's ability to effectively prioritize transactions based on urgency.

\subsection{Cross Chain Analysis}

\begin{figure}[!ht]
    \centering
    \begin{subfigure}[b]{0.48\textwidth}
        \includegraphics[width=\textwidth]{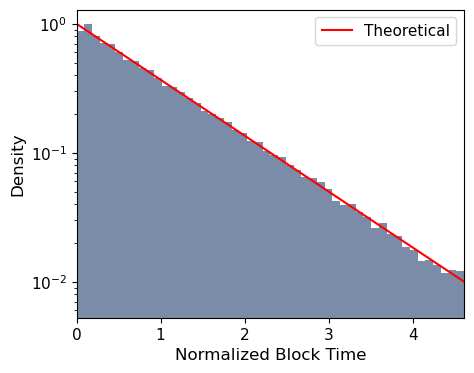}
        \caption{LTC block time distribution}
        \label{fig:ltc_dist}
    \end{subfigure}
    \hfill
    \begin{subfigure}[b]{0.48\textwidth}
        \includegraphics[width=\textwidth]{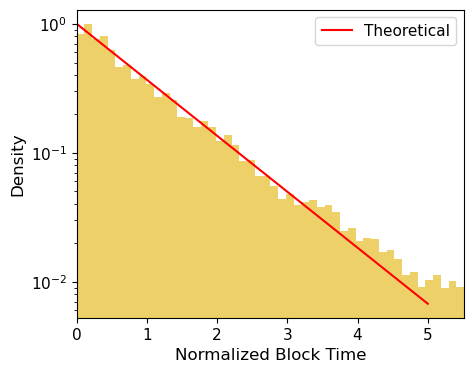}
        \caption{DOGE block time distribution}
        \label{fig:doge_dist}
    \end{subfigure}
    \hfill
    \begin{subfigure}[b]{0.48\textwidth}
        \includegraphics[width=\textwidth]{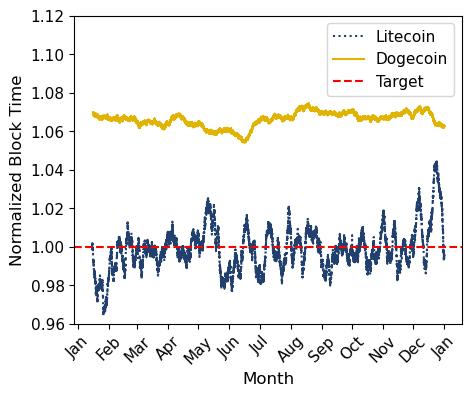}
        \caption{Block time stability comparison}
        \label{fig:blocktime_stability}
    \end{subfigure}
    \caption{Block time characteristics across networks}
    \label{fig:blocktime_analysis}
\end{figure}

We extend our analysis to other major Proof-of-Work cryptocurrencies to examine how different network parameters affect fee market dynamics. We focus on Litecoin (LTC), and Dogecoin (DOGE), analyzing their transaction data throughout the year 2023. These networks have different design parameters: LTC targets 2.5-minute blocks with $K$=1,384 and DOGE aims for 1-minute blocks with $K$=2,597. The corresponding utilization rates ($\bar{\rho}$) are 0.2291 and 0.1076 respectively.
The block time distributions (Figures~\ref{fig:ltc_dist}-\ref{fig:doge_dist}) closely follow exponential distributions when normalized by their respective target times, confirming the memoryless property of the mining process across different networks. Figure~\ref{fig:blocktime_stability} illustrates the stability of the average block time over the course of 2023. In contrast, Dogecoin consistently exhibits a deviation of approximately 7\%. This is presumed to be due to the fact that the target block time is as short as one minute, and the physical network propagation time (approximately three seconds) has a greater impact. 

\begin{figure}[!ht]
    \centering
    \begin{subfigure}[b]{0.48\textwidth}
    \includegraphics[width=\textwidth]{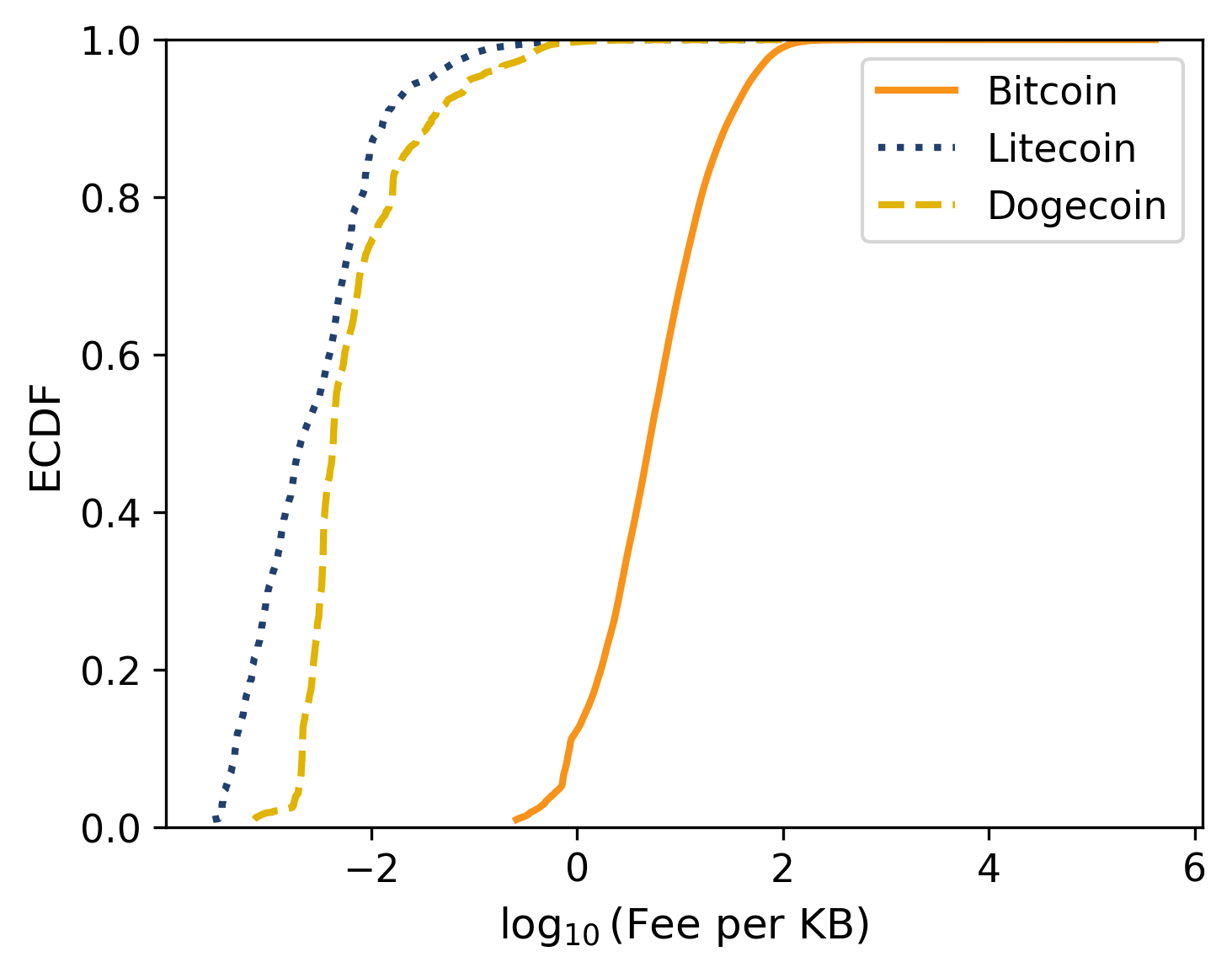}
    \caption{Fee ECDF comparison}
    \label{fig:fee_ecdf}
    \end{subfigure}
    \hfill
    \begin{subfigure}[b]{0.48\textwidth}
    \includegraphics[width=\textwidth]{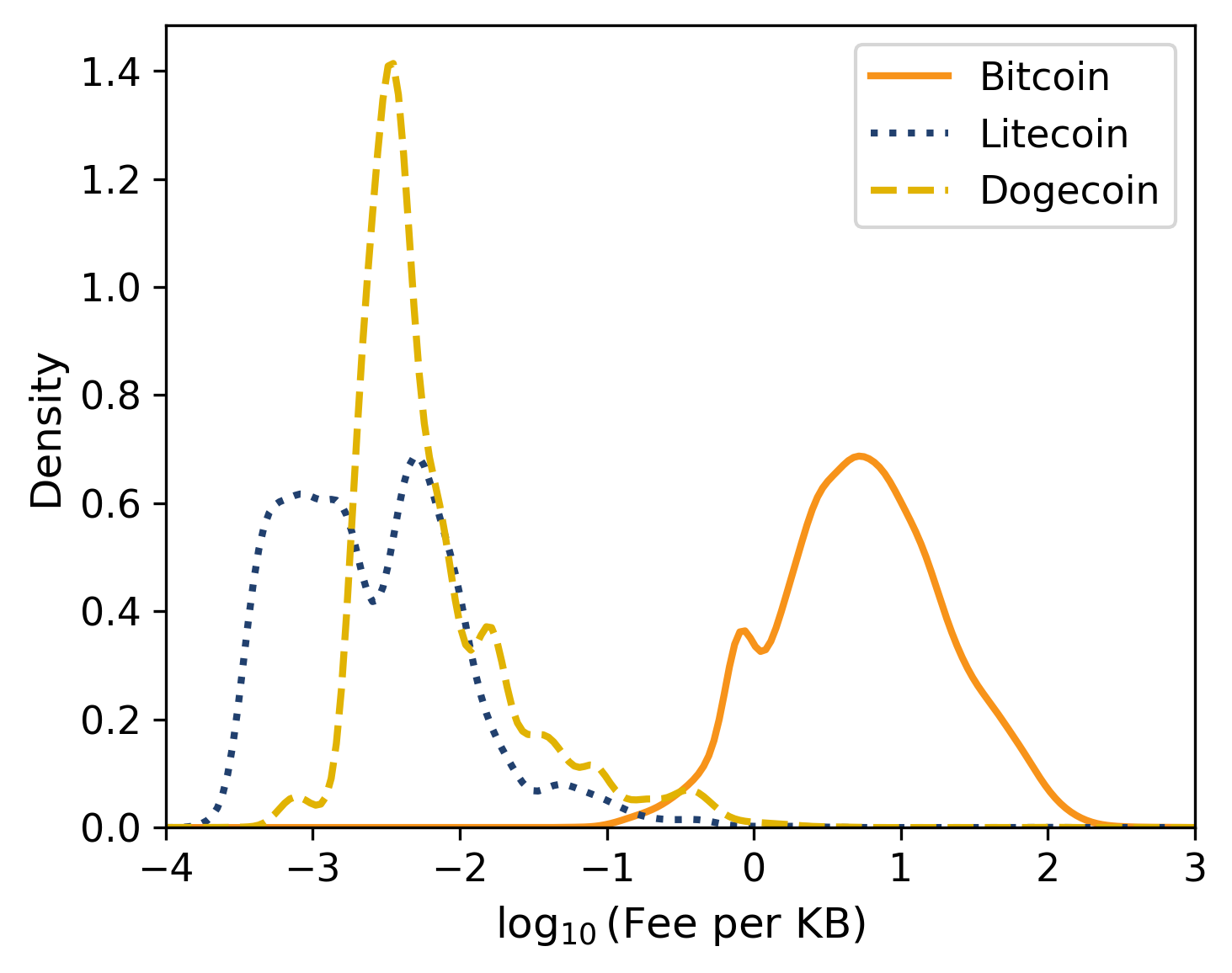}
    \caption{Fee density comparison}
    \label{fig:fee_dist}
    \end{subfigure}
    \caption{Fee distributions across networks}
    \label{fig:fee_analysis}
\end{figure}

The fee structures exhibit considerable heterogeneity across networks, stemming from their distinct transaction architectures and prioritization mechanisms. While Bitcoin employs weight units for transaction prioritization, Litecoin and Dogecoin utilize physical size. Despite these differences, fee per KB metrics are available across all networks, enabling cross-chain comparison as shown in Figure~\ref{fig:fee_ecdf}. The empirical cumulative distribution functions and density plots reveal markedly different fee patterns across these networks.
The density distributions demonstrate substantial divergence in fee structures, with no apparent commonalities across chains. Bitcoin exhibits a relatively concentrated distribution with a primary peak around $10^1$ fee-per-KB, while Litecoin shows a bimodal pattern with peaks at approximately $10^{-3}$ and $10^{-2}$. Dogecoin displays the most distinctive pattern with a sharp, concentrated peak at $10^{-2.5}$. 

\begin{figure}[!ht]
    \centering
    \includegraphics[width=0.6\textwidth]{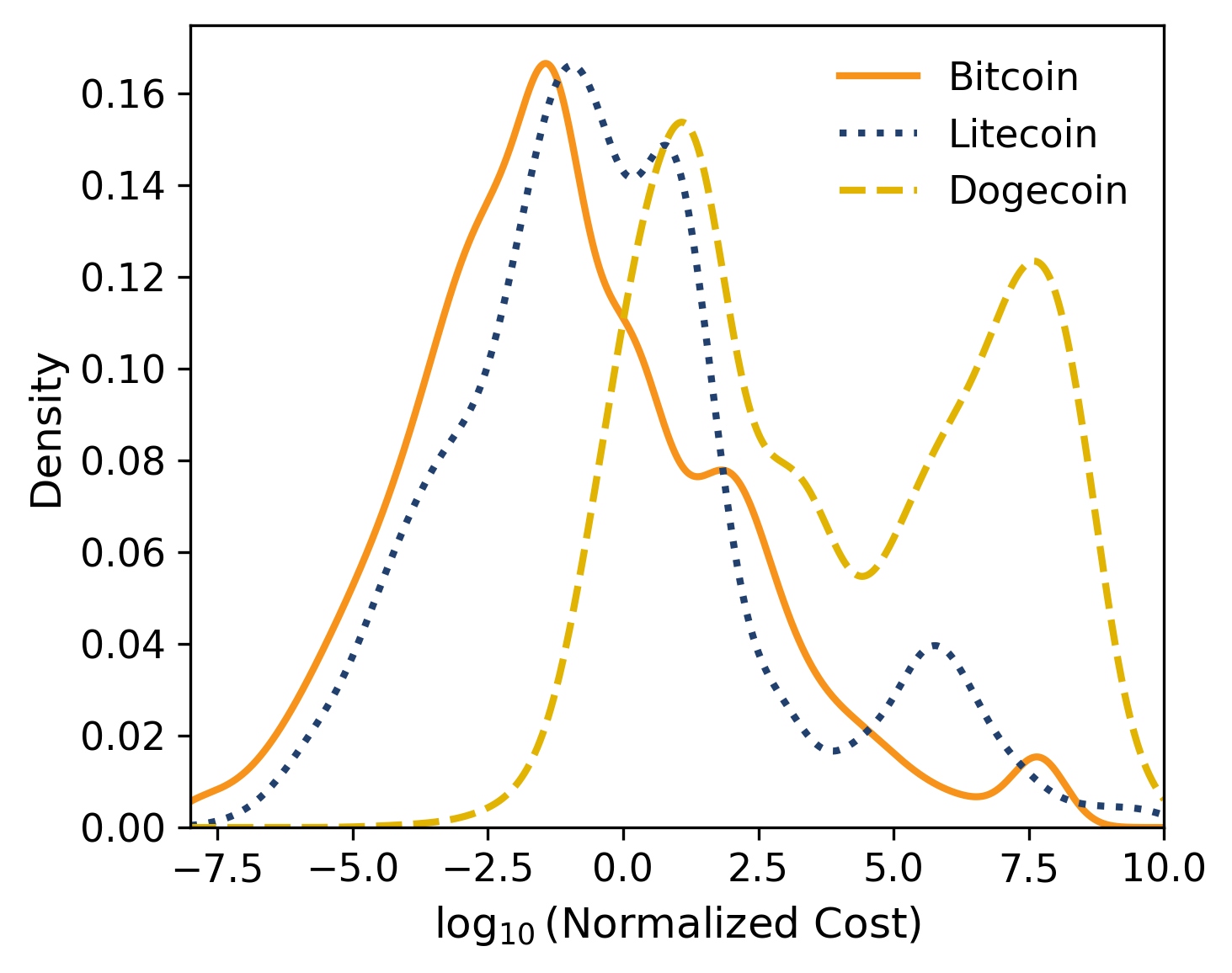}
    \caption{Normalized cost structures across networks}
    \label{fig:cost_pdf}
\end{figure}

We apply the same procedure as done for the Bitcoin network to retrieve the hidden cost structure. The normalized cost distributions in Figure~\ref{fig:cost_pdf} reveal similarities in the underlying cost structures across networks, despite their different fee patterns. All three networks show a primary concentration of user activity around normalized costs of $10^{-1}$ to $10^1$, suggesting common patterns in user behavior when transaction-specific attributes are accounted for. 
While sharing fundamental similarities, each network exhibits distinctive characteristics in their detailed structure. Notably, Dogecoin displays an additional mode around $10^7$ in the normalized cost distribution, attributable to its minimum fee requirements (0.01 DOGE dust limit, 0.001 DOGE/kB) that might affect small-value transactions. Litecoin similarly shows a secondary peak at higher cost levels, reflecting the impact of minimum fee constraints rather than queue congestion.

\section{Conclusion}\label{sec:conclusion}

This paper has investigated transaction fees and their implications in decentralized networks, particularly Proof-of-Work based blockchain systems. Through our queueing based modeling and analysis, we conducted an analysis of a priority queue with bulk service and applied the result to understand strategic user behaviors in bidding transaction fees. 
Central to our investigation, more specifically, was the adoption of the M/G\textsuperscript{$K$}/1 priority queue model to represent the blockchain queue. This study contributes to the literature on the queueing approach to blockchain systems in that, first, a concrete semi-closed form expressions for steady-state quantities such as expected queue length are presented and, second, the hidden delay cost structure of blockchain users is presented when the block generation time has a general distribution. 

To be more specific, we find that estimating waiting time in steady-state distribution is instrumental in pinpointing optimal fees for users across various queue scenarios. Our analysis of normalized transaction costs across different blockchain networks reveals remarkably similar underlying cost structures, suggesting common patterns in user behavior despite varying network parameters and congestion levels. This finding provides empirical supports for our theoretical framework linking delay costs to optimal bidding strategies. Furthermore, our approach enables us to predict how users would react to a change in blockchain characteristics. We demonstrated that users would bid higher fees if there is a higher uncertainty in block generation time but bid less if there is a fixed gap in block generations. 

Nevertheless, this work has limitations. One such limitation is our model's exclusion of post-transaction fee modification strategies like Child-Pays-for-Parent  and Replace-by-Fee. These strategies allow users to elevate their transactions' priority post submission, introducing additional layers of complexity to queue dynamics that our current model does not address.
%
%
Notwithstanding these limitations, we believe this study advances our understanding of the intricate relationship between block generations, transaction fee dynamics, and the  hidden delay cost structure of users. The insights from the study should be helpful for refining fee mechanisms and for enhancing efficiency and fairness on blockchains. 
%

\section*{Acknowledgement}
    This work was supported by the National Research Foundation of Korea (NRF) funded by the Korea Government (MSIT) under Grant RS-2023-00278082 and by the Institute of Information \& Communications Technology Planning \& Evaluation (IITP)‐Global Data‐X Leader HRD program grant funded by the Korea government (MSIT) (IITP‐RS‐2024‐00440626).  

\section*{Disclosure of Interest}

There are no relevant financial or non-financial competing interests to report.


\section*{Disclaimer}

The work of Donghwa Seo was completed prior to joining Samsung SDS. The views and conclusions expressed herein are those of the authors and do not necessarily reflect any policy or position of Samsung SDS. 

\bibliographystyle{plainnat}
\bibliography{Ch3_ref}

@article{kim2024mind,
  title={Mind the gap in the mining game},
  author={Kim, K. and Seo, D.},
  year = {2024},
  note={Working Paper},
  journal={}
}

@book{chaudhry1983first,
  title={A First Course in Bulk Queues},
  author={Chaudhry, M. L. and Templeton, J. G. C.},
  publisher={Wiley},
  year={1983}
}

@article{downton1956limiting,
  title={On limiting distributions arising in bulk service queues},
  author={Downton, F.},
  journal={Journal of the Royal Statistical Society Series B: Statistical Methodology},
  volume={18},
  number={2},
  pages={265--274},
  year={1956}
}

@article{bailey1954queueing,
  title={On queueing processes with bulk service},
  author={Bailey, N. T. J.},
  journal={Journal of the Royal Statistical Society: Series B (Methodological)},
  volume={16},
  number={1},
  pages={80--87},
  year={1954}
}

@article{oblakova2019exact,
  title={An exact root-free method for the expected queue length for a class of discrete-time queueing systems},
  author={Oblakova, A. and Al Hanbali, A. and Boucherie, R. J. and van Ommeren, J. C. W. and Zijm, W. H. M.},
  journal={Queueing Systems},
  volume={92},
  pages={257--292},
  year={2019}
}

@inproceedings{carlsten2016instability,
author = {Carlsten, M. and Kalodner, H. and Weinberg, S. M. and Narayanan, A.},
title = {On the Instability of {B}itcoin Without the Block Reward},
year = {2016},
booktitle = {Proceedings of the 2016 ACM SIGSAC Conference on Computer and Communications Security},
pages = {154–-167}
}

@inproceedings{kawase2017transaction,
  title={Transaction-confirmation time for {B}itcoin: {A} queueing analytical approach to blockchain mechanism},
  author={Kawase, Y. and Kasahara, S.},
  editor={Yue, W. and Li, Q.-L. and  Jin, S. and Ma, Z.},
  booktitle={Queueing Theory and Network Applications},
  pages={75--88},
  year={2017}
}

@article{huberman2021monopoly,
  title={Monopoly without a monopolist: {A}n economic analysis of the {B}itcoin payment system},
  author={Huberman, G. and Leshno, J. D. and Moallemi, C.},
  journal={Review of Economic Studies},
  volume={88},
  number ={6},
  pages={3011--3040},
  year={2021}
}

@article{easley2019mining,
  title={From mining to markets: {T}he evolution of bitcoin transaction fees},
  author={Easley, D. and O'Hara, M. and Basu, S.},
  journal={Journal of Financial Economics},
  volume={134},
  number={1},
  pages={91--109},
  year={2019}
}

@InProceedings{li2018blockchain,
author="Li, Q.-L. and Ma, J.-Y. and Chang, Y.-X.",
editor="Chen, X. and Sen, A. and Li, W. W. and Thai, M. T.",
title="Blockchain queue theory",
booktitle="The 7th International Conference on Computational Data and Social Networks",
year="2018",
pages="25--40"
}

@inproceedings{papadis2018stochastic,
  title={Stochastic models and wide-area network measurements for blockchain design and analysis},
  author={Papadis, N. and Borst, S. and Walid, A. and Grissa, M. and Tassiulas, L.},
  booktitle={IEEE Conference on Computer Communications},
  pages={2546--2554},
  year={2018}
}

@article{kasahara2019effect,
  title={Effect of {B}itcoin fee on transaction-confirmation process},
  author={Kasahara, S. and Kawahara, J.},
  journal={Journal of Industrial \& Management Optimization},
  volume={15},
  number={1},
  pages={365},
  year={2019},
  publisher={American Institute of Mathematical Sciences}
}

@article{banerjee2015analysis,
  title={Analysis of a finite-buffer bulk-service queue with multiple servers and batch {M}arkovian arrival process},
  author={Banerjee, A. and Gupta, U. C. and Chakravarthy, S. R.},
  journal={Computers \& Operations Research},
  volume={60},
  pages={138--149},
  year={2015}
}

@article{claeys2013tail,
  title={Tail probabilities of the delay in a batch-service queueing system with batch-size dependent service times and a timer mechanism},
  author={Claeys, D. and Steyaert, B. and Walraevens, J. and Laevens, K. and Bruneel, H.},
  journal={Computers \& Operations Research},
  volume = {40},
  number ={5},
  pages={1497--1505},
  year={2013}
}

@book{hassin2003queue,
  title={To Queue or Not To Queue: {E}quilibrium Behavior in Queueing Systems},
  author={Hassin, R. and Haviv, M.},
  year={2003},
  publisher={Springer}
}

@article{dimitri2019transaction,
  title={Transaction fees, block size limit, and auctions in {B}itcoin},
  author={Dimitri, N.},
  journal={Ledger},
  volume={4},
  year={2019}
}

@inproceedings{moser2015trends,
  title={Trends, tips, tolls: {A} longitudinal study of {B}itcoin transaction fees},
  author={M{\"o}ser, M. and B{\"o}hme, R.},
  booktitle={International Conference on Financial Cryptography and Data Security},
  editors ={Brenner, M. and Christin, N. and Johnson, B. and Rohloff, K.},
  pages={19--33},
  year={2015}
}

@article{houy2014economics,
  title={The economics of {B}itcoin transaction fees},
  author={Houy, N.},  
  journal = {},
  year={2014},
 note={Working Paper}
}

@article{ilk2021stability,
   title={Stability of Transaction Fees in {B}itcoin: {A} Supply and Demand Perspective},
   author = {Ilk, N. and Shang, G. and Fan, S. and Zhao, J. L.}, 
   journal = {MIS Quarterly},
   volume = {45},
   number={2},
   year={2021},
   pages ={563--592}
}

@article{lavi2022redesign,
author = {Lavi, R. and Sattath, O. and Zohar, A.},
title = {Redesigning {B}itcoin’s Fee Market},
year = {2022},
journal={ACM Transactions on Economics and Computation},
volume = {10},
number = {1},
pages={Article 5}
}

@article{basu2023stable,
author = {Basu, S. and Easley, D. and O’Hara, M. and Sirer, E. G.},
title={Stable{F}ees: {A} Predictable Fee Market for Cryptocurrencies},
year={2023},
volume={69},
number={11},
journal={Management Science},
pages={6508--6524}
}

@article{rough2024fee,
author = {Roughgarden, T.},
title = {Transaction Fee Mechanism Design},
year = {2024},
volume = {71},
number = {4},
journal = {Journal of the ACM},
pages={Article 30}
}
\end{document}